\def\thm@space@setup{%
  \thm@preskip=\parskip \thm@postskip=0pt
}
\newtheorem{theorem}{Theorem}[section]
\newtheorem{lemma}[theorem]{Lemma}
\newtheorem{claim}[theorem]{Claim}
\newtheorem{corollary}[theorem]{Corollary}
\newtheorem{observation}[theorem]{Observation}
\newtheorem*{remark}{Remark}
\long\def\comment #1\commentend{}
\def\Prb{\mbox{\textup{Pr}}}
\def\a{\alpha}
\def\b{\beta}
\def\r{\lambda}
\def\X{\bm{x}}
\def\Xnr{\X^{n,\r}}
\def\discost{\psi}
\def\F{\mathcal{F}}
\def\H{\mathcal{H}}
\def\M{\mathcal{M}}
\def\cL{\mathcal{L}}
\def\Cfree{C^{free}}
\def\Ebb{\mathbb{E}}
\def\fthotel{\mbox{\texttt{FPH}}}
\def\PoA{\textsf{PoA}}
\def\PoS{\textsf{PoS}}
\def\NE{\Xnr}
\def\FF{\bm{y}^{n,0}}
\def\OPT{\bm{o}^{n,\r}}
\newcommand{\Xm}[1]{\X_{-#1}}
\newcommand{\mean}[1]{\hat{#1}}
\newcommand{\EH}[1]{\mean{\H}\left(#1\right)}
\newcommand{\EM}[1]{\mean{\M}\left(#1\right)}
\newcommand\numberthis{\addtocounter{equation}{1}\tag{\theequation}}
\title{Hotelling Games with Multiple Line Faults}
\author{
Avi Cohen\thanks{Weizmann Institute of Science, Rehovot, Israel.
\{avi.cohen,david.peleg\}@weizmann.ac.il}
\and
David Peleg$^*$
}
\begin{document}

\maketitle

\begin{abstract}
	The Hotelling game consists of $n$ servers each choosing a point on the line segment, so as to maximize the amount of clients it attracts. Clients are uniformly distributed along the line, and each client buys from the closest server. In this paper, we study a \emph{fault-prone} version of the Hotelling game, where the line fails at multiple random locations. Each failure disconnects the line, blocking the passage of clients. We show that the game admits a Nash equilibrium if and only if the rate of faults exceeds a certain threshold, and calculate that threshold approximately. Moreover, when a Nash equilibrium exists we show it is unique and construct it explicitly.

 Hence, somewhat surprisingly, the potential occurrence of failures has a stabilizing effect on the game (provided there are enough of them). Additionally, we study the social cost of the game (measured in terms of the total transportation cost of the clients), which also seems to benefit in a certain sense from the potential presence of failures.
\end{abstract}
\section{Introduction}

\paragraph{Background and Motivation.} The Hotelling game, introduced by Hotelling in the seminal~\cite{hotelling1929stability}, is a widely studied model of spatial competition. It considers two servers, each can choose where to set its shop along a street (a segment). Clients are assumed to be uniformly distributed along the street, and to shop at the closest server. As a consequence, the payoff of each server is equal to the size of the segment of points closer to it than to the other server (a.k.a. its \emph{Voronoi cell}).  A benevolent central planner, serving the public good (i.e., the interests of the clients), would require the two servers each to occupy one of the points $1/4$ and $3/4$. 
However, this does not produce an equilibrium, as each server can profit by moving closer to the other. In equilibrium, both servers occupy the center of the line. This is suboptimal for the clients, as they have to travel twice as far as in the social optimum, on average, to reach the closest server.
 Thus emerges what has later been called the ``Principle of Minimum Differentiation''~(\cite{boulding1966economic}), according to which competitors tend to cluster together. Hotelling stressed that this game expresses not only \emph{geographic} location, but also product selection (which can be viewed as location within a characteristic space), and a candidate's political platform (viewed as location on the political spectrum). Accordingly, the principle of minimum differentiation also expresses the tendency to reduce variety in products and policies. That is, competition between companies often diminishes the choices available to consumers rather than expanding them.
The 3-player version of the Hotelling game, studied later~(\cite{lerner1937some}), has no Nash equilibrium, since each server would seek to take up a position close to one of the other servers but not between them.

Game Theory typically assumes a reliable environment and rational payoff maximizing players. In reality, however, no system is infallible; players are prone to mistakes and unexpected behavior, and the environment itself is subject to failures. Therefore, it is important to examine how player strategies are affected by failures in components of the game.
In particular, in the context of the Hotelling game, it is natural to consider an extension of the game to an $n$-player {\em fault-prone} Hotelling game. In such a game, $n$ servers choose a location along the $[0,1]$ segment, and as their payoff they get the expected size of their Voronoi cell in the presence of random \emph{line faults}.
A line fault is a disconnection at a point on the line through which clients cannot pass.
Such line disconnections represent temporary road blockages, which may be caused by many different reasons, e.g., construction work, traffic accidents, congested intersections, parades, public demonstrations, etc. Such blockages clearly affect the profits of shops along the street, and therefore change their optimal location strategies.

We are interested in studying the effects line faults have on the game, including the existence and structure of Nash equilibria, along with other types of influences, such as the changes in total transportation cost compared with the social optimum (possibly taking into account also the possibility that clients may become ``disconnected'' from all servers when surrounded by faults.
Understanding this setting calls for the study of location games through the perspective of fault-tolerant distributed systems, and for combining ideas from the fields of Game Theory and Computer Science.

The $n$-player Hotelling game on the line with a \emph{single line fault} was recently introduced in Avin et al.~\cite{ACLP18}, and its Nash equilibria were completely characterized for every $n$.
However, the framework of~\cite{ACLP18}, which is based on a uniform distribution of the fault location over the line, and the direct analysis method employed therein, do not extend easily to settings allowing the possibility of \emph{multiple line faults} occurring simultaneously. In particular, in such a generalized setting, it is necessary to define the distribution of failures so as to make the analysis feasible.
In the current paper, we address this generalized fault setting, define a suitable model (based on Poisson distributed failures) and analyze the properties of Nash equilibria in it.

Our choice of the Poisson process as the distribution of faults may seem restrictive, but it turns out to be an excellent model for many different real-world phenomena, and is commonly used as the distribution of random and independent points in continuous space across many different disciplines, including, among others, queueing theory~\cite{kleinrock1975queuing}, wireless networks~\cite{gilbert1961random} and continuum percolation~\cite{meester1996continuum}.

Interestingly, the possibility of two simultaneous faults seems to radically change player behavior, when compared to the single fault variant. Notably, we show that the possibility of more than one fault occurring forces the servers to distance themselves from their neighbors. Hence in this setting, the principle of minimum differentiation no longer holds, and client transportation cost in is reduced compared to the single fault model. Another difference concerns the existence of Nash equilibria. For example, the 3-player game admits an equilibrium in a model allowing multiple (randomly distributed) failures, but not in the failure-free or single-failure models.

\paragraph{Contributions.} In our model the faults are distributed along the line according to a Poisson process with parameter $\r$. Informally, each point on the line may be faulty with some small probability, such that the number of faults is $\r$ in expectation.
We start by characterizing the Nash equilibria of the 2-player game (Theorem~\ref{thm:NE-2players}). For $n\geq3$, we show that every equilibrium is what we call a \emph{canonical profile} (Theorem~\ref{thm:NE-form}), i.e., the distance between every pair of neighboring players is constant, and the distance from the leftmost player to 0 (a.k.a. the \emph{left hinterland}) is the same as the distance from the rightmost player to 1 (a.k.a. the \emph{right hinterland}). Moreover, the canonical profile, when it exists, is uniquely defined for every $n$ and $\r$. This reduces our problem to considering whether the canonical profile is a Nash equilibrium for given values of $n$ and $\r$. As our main result, we show that a Nash equilibrium exists for the $n$-player game if and only if $\r\geq \r_{\min}(n)$, for some \emph{threshold function}  $\r_{\min}(n)$ (Theorem~\ref{thm:NE-lower-bound}). Additionally, we show a way to efficiently approximate the values of $\r_{\min}(n)$ to any precision.

Finally, we also studied the effects of failures on social cost, measured
in two different ways. First, we considered failure free executions,
and looked at the total transportation cost of the clients. We compared
this cost for the Nash equilibria of our game (in failure-free executions)
against the cost incurred on the clients in the Nash equilibria obtained in the classical (failure-free) Hotelling game (see Sect. \ref{sec:pos}).
The limitation of this type of comparison is that it deals only with the
behavior of strategies in failure free executions. Our second approach was
therefore to concentrate on executions with failures.
In such executions it is possible for clients to get completely disconnected.
In such a case, we set the cost for a disconnected client to some (large)
constant. We similarly compared the total cost to clients in Nash equilibria of our game and in Nash equilibria of the classical game, both under the possibility of faults.

\paragraph{Related Work.}%
Coping with faults constitutes a fertile area of research in Computer Science. Specifically, there is a large body of research on fault tolerant facility location problems~(\cite{chechik2014robust,chechik2015fault,khuller2000fault,snyder2016or,sviridenko2002improved,swamy2008fault}). Recently, the issue of failures and their effects has been considered in the field of Game Theory~(\cite{eliaz2002fault,kalai2004large,feige2011mechanism,GR14,meir2012congestion}). Nevertheless, relatively little work has been done to apply game theory to fault tolerant facility location. The few papers that do consider this problem, such as Wang and Ouyang~\cite{wang2013continuum} and Zhang et al.~\cite{zhang2016competitive}, consider location models that are far removed from the Hotelling game we consider here.

Hotelling's model and its many variants have been studied extensively.
Eaton and Lipsey~\cite{eaton1975principle} extended Hotelling's analysis to any number of players and different location spaces. Our model is a direct extension of their $n$-player game on the line segment. d'Aspremont et al.~\cite{d1979hotelling} criticized Hotelling's finding and showed that when players compete on price as well as location, they tend to create distance from one another, otherwise price competition would drop their profit to zero. Our results show differentiation in location in the classic $n$-player Hotelling game without introducing price competition. A large portion of the Hotelling game literature is dedicated to models with price competition. We, however, exclusively consider pure location competition models since they are closely related to the facility location problem studied in Computer Science. Eiselt, Laporte and Thisse~\cite{eiselt1993competitive} provide an extensive comparison of the different models classified by the following characteristics: the number of players, the location space (e.g., circle, plane, network), pricing policy, the behavior of players, and the behavior of clients. (For more recent surveys see Eiselt et al.~\cite{eiselt2011equilibria} and Ashtiani~\cite{ashtiani2016competitive}.)

Randomness in client behavior was introduced by De Palma et al.~\cite{de1987existence}. Their model assumes client behavior has an unpredictable component due to unquantifiable factors of personal taste, and thus clients have a small probability of ``skipping'' the closest server and buying from another. In their model, all servers would locate at the center in equilibrium, reasserting Hotelling's conclusion. In our model, clients exhibit randomness in their choice of servers as well, but in equilibrium servers create a fixed distance from their neighbors. In a similar more recent work, Peters et al.~\cite{peters2015waiting} modified client behavior by introducing queues to each server. Clients would thus behave strategically to minimize the distance and the waiting time as a total cost.

Several recent works have considered Hotelling games on networks (\cite{palvolgyi2011hotelling,fournier2016general,fournier2016hotelling}). P\'{a}lv\"{o}lgyi's~\cite{palvolgyi2011hotelling} main result was that on any network there exists a Nash equilibrium when there are sufficiently many players in the game. While we have not yet considered networks as an action space, our results imply that for any number of players there exists a Nash equilibrium when there are sufficiently many faults. Fournier~\cite{fournier2016general} considers a general distribution of clients along the network, but assumes clients always choose the closest server, i.e., the Voronoi cells are as in the original model. We, on the other hand, assume the clients are uniformly distributed but the Voronoi cells are modified by faults. The discrete version of Hotelling games on networks (a.k.a. Voronoi games) were considered by Marvonicolas et al.~\cite{mavronicolas2008voronoi} and Feldman et al.~\cite{feldmann2009nash}.

Avin et al.~\cite{ACLP18} introduced the fault prone Hotelling game. They studied two types of faults: line faults and player faults. The Hotelling game with line faults they considered was limited to a single fault, and the resulting Nash equilibria (of which there are none for $n=3$, exactly one for $n=1,2,4,5$, and infinitely many for $n\ge 6$) were similar to the ones obtained in the failure-free setting, except with the players located slightly closer to the center. In this paper, we show that when multiple faults are considered, a more interesting picture emerges, and the resulting equilibria are significantly different than those obtained in the failure-free setting. The other fault model they considered was the Hotelling game with player faults, where each player has some probability of being removed from the game. It is shown there that in the player faults setting there exists no Nash equilibrium for $n\geq3$ players.


\section{The Model}
\paragraph{The Game.}
In this section we present the location model studied in this paper.
\comment
\paragraph{Servers and Clients.}
\commentend
The system consists of a finite set $N=\{1,\ldots,n\}$ of servers (acting as the \emph{players} in our game formulation), each of whom has to decide where to set shop along the interval $[0,1]$. We assume that clients are uniformly distributed along the line, and that they choose the closest server that is not disconnected from them. Each server wants to maximize its expected market share in the presence of faults. It is possible for more than one server to occupy the same location. In that case, clients choosing that location are divided equally between the servers at that location.

\comment
\paragraph{Faults.}
\commentend
The model assumes that faults, or disconnections, occur at random along the line segment $[0,1]$, and clients cannot choose servers beyond a disconnection. The set of faults, denoted as $\F$, is distributed along the line at random according to a Poisson process with rate $\r>0$. More specifically, the probability that $k$ faults occur in any segment $[a,b] \subseteq [0,1]$ is
$$
    \Prb\left[|\F \cap [a,b]|=k\right]= \frac{e^{-\r(b-a)}\cdot (\r(b-a))^k}{k!}~.
$$
Namely, it is a Poisson distribution with rate $\r(b-a)$. Intuitively, the Poisson process could be viewed as a continuous analogue of the Binomial distribution. Divide the $[0,1]$ line into small segments of length $\delta>0$ and let $p=\frac{\r}{\delta}$. At each $\delta$ segment a fault occurs with probability $p \cdot \r$. As $\delta \to 0$ the number of faults in a segment $[a,b] \subseteq [0,1]$ converges to the Poisson distribution with rate $\r(b-a)$. Alternatively, we can define the Poisson process as a sequence of exponential random variables, i.e., the distance between each pair of consecutive faults is an exponential random variable with rate $\r$. (For a formal definition of the Poisson process see Appendix~\ref{sec:poisson-process}.)

\comment
\paragraph{Markets in the presence of faults.}
\commentend
Consider a specific instance of the system. In this instance, let $\X = (x_1,\ldots,x_n) \subset [0,1]$ be the profile of the server locations. Let $ \F = \{f_1,\ldots, f_k\} $ be a given set of faults that have occurred. Two servers $i,j\in N$  are said to be \emph{colocated} if $x_i=x_j$. For $i\in N$, the set of $i$'s colocated servers is defined as $ \Gamma_i= \{j \in N \mid x_j=x_i\}$, and the size of this set is defined as $\gamma_i= |\Gamma_i|$. A server that is not colocated with other servers is called \emph{isolated}. Two servers are called \emph{neighbors} if no server is located strictly between them. A \emph{left (resp., right) peripheral server} is a server that has no servers to its left (resp., right). The servers divide the line into \emph{regions} of two types: \emph{internal regions}, which are regions between two neighbors, and two \emph{hinterlands}, which include the region between 0 and the left peripheral server, and the region between the right peripheral server and 1. (See Figure~\ref{fig:NE-shape}, where the two hinterlands are marked by $H$.)

The \emph{market} of each server $i\in N$ is the line segment in which clients choose location $x_i$. Note that colocated servers have the same market. Let $x_i^\ell$ and $x_i^r$ be the locations of $i$'s left and right neighbor, respectively. When no left (resp., right) neighbor exists we define $x_i^\ell=-1$ (resp., $x_i^r=2$). Let $f_i^\ell$ and $f_i^r$ be the closest faults to the left and right of $i$, respectively. When there are no faults to the left (resp., right) of $x_i$ we define $f_i^\ell=-\infty$ (resp., $f_i^r=\infty$). We define the market of $i\in N$ as the segment $[L_i,R_i]$, where $L_i$ and $R_i$ are defined as follows:
$$
\hbox{$
L_i=\left\{
				\begin{array}{ll}
					f_i^\ell,		& \mbox{if } f_i^\ell \geq x_i^\ell~; \\
                    0,               & \mbox{if } f_i^\ell < x_i^\ell=-1~; \\
					x_i^\ell, 	 & \mbox{otherwise.}
				\end{array}
			\right.
$}
~~~~~~~~~~~~~~
\hbox{$
	R_i=\left\{
				\begin{array}{ll}
					f_i^r,		& \mbox{if } f_i^r \leq x_i^r~; \\
                    1,          & \mbox{if } f_i^r > x_i^r=2~; \\
					x_i^r, 	 & \mbox{otherwise.}
				\end{array}
			\right.
$}
$$
In the definition of $L_i$,
the first case handles a situation where a failure occurs to the left of $i$ but to the right of its left neighbor if exists. The second concerns the case where there are neither neighbors nor failures to the left of $i$. The third handles a case where there is no failure between $i$ and its left neighbor.
The definition of $R_i$ is analogous.


Server $i$'s market is divided into two parts, $[L_i,x_i]$ and $[x_i,R_i]$, referred to as server $i$'s \emph{left} and \emph{right} half-markets, respectively.

\comment
\paragraph{The Game.}
\commentend
The \emph{fault-prone Hotelling game} is denoted as $\fthotel(n,\F)$, where $N=\{1,\ldots,n\}$ is the set of players and $\F$ is the distribution of faults. For $i \in N$, the \emph{action} of player $i$ consists of selecting its location, $x_i \in [0,1]$ . The vector $ \X = (x_i)_{i\in N}$ is the profile of actions. Let $\Xm i$ denote the profile of actions of all the players different from $i$. Slightly abusing notation, we will denote by $(x'_i, \Xm i)$ the profile obtained from a profile $\X$ by replacing its $i$th coordinate $x_i$ with $x'_i$.

We denote the \emph{size} of server $i$'s market $[L_i,R_i]$ by $D_i = R_i- L_i$. This is itself a random variable, and we define the \emph{payoff} of player $i$ given the profile $\X$ as the expectation (over the possible failure configurations) of $D_i$ divided by the number of players colocated with $i$, namely,
$$u_i(\X)= \frac{\Ebb\left[ D_i \right]}{\gamma_i}~.$$
However, for the analysis, it is more convenient to view the payoff as composed of the \emph{left} and \emph{right payoffs}, $D_i=D_i^\ell+D_i^r$, where $D_i^\ell=x_i-L_i$ and $D_i^r=R_i-x_i$, and analyze $D_i^\ell$ and $D_i^r$ separately. The reason for this is that, as it turns out, $\Ebb[D_i^r]$ is only a function of the length $x_i^r-x_i$ of the right region of $i$ and the failure distribution, and similarly $\Ebb[D_i^\ell]$ depends only on $x_i-x_i^\ell$ and the failures, making it easier to analyze them separately.

Given a profile $\X$, $x'_i\in[0,1]$ is an \emph{improving move} for player $i$ if $u_i(x'_i,\Xm i) > u_i(\X)$. $x^*_i\in X$ is a \emph{best response} for player $i$ if $u_i(x^*_i,\Xm i) \geq u_i(x'_i,\Xm i)$ for every $x'_i\in [0,1]$.

A profile $\X^*$ is a \emph{Nash equilibrium} if no player has an improving
move, i.e., for every $i\in N$ and every $x_i\in X$,
$u_i(\X^*) \geq u_i(x_i,\Xm i^*)$.

\paragraph{Efficiency of Equilibria.}
The Hotelling game with no faults, $\fthotel(n,0)$, is a constant-sum game, making the sum of player payoffs a poor measure of efficiency since it does not depend on player strategies. Therefore, Fournier and Scarsini~\cite{fournier2016hotelling} measure the efficiency of Nash equilibria in terms of the client's transportation costs. While our fault-prone Hotelling game is not a constant-sum game, we recognize the importance of the clients' transportation costs as a measure of efficiency, and consider it as a basis for comparison to variants of the Hotelling game other than our own.

Following Fournier and Scarsini~\cite{fournier2016hotelling},
for $\X\in[0,1]^n$ and $ y\in[0,1] $ define the transportation cost for a client
located in $y$ to be the distance from $y$ to the closest server,
$$ d(\X,y) = \min_{i\in\{1,\ldots,n\}}|y-x_i|~.$$
The \emph{failure-free transportation cost} $\Cfree(\X)$ is
$$ \Cfree(\X)=\int_0^1d(\X,y)dy~,$$
namely, the total transportation cost for clients, when each one of them buys
from the closest server. Notice that this definition of $\Cfree(\X)$ assumes that
no failures have occured, and depends on the profile of player locations alone.
The definition is not easily amenable to adaptation to scenarios with failures,
since once faults actually occur, each client has some probability of being
disconnected from all players, making its expected transportation cost infinite.

To overcome this difficulty, we explore two different approaches.
The first
is based on using the measure $\Cfree$ but restricting
the class of scenarios on which we compare the cost of profiles.
While our model accommodates failures, and the player strategies must take
this possibility into account, we measure and compare transportation cost
exclusively in the case where no faults have occurred. In other words,
we examine how the \emph{possibility} of failures affects player strategies
and, as a direct result, the resulting total cost for the clients.

Consider a game $\fthotel(n,\r)$ that has a Nash equilibrium. We denote by
$S_{n,\r}$ the set of pure Nash equilibria of the game $\fthotel(n,\r)$ and
define the \emph{price of anarchy} and the \emph{price of stability},
respectively, as
$$ \PoA(n,\r) = \frac{\sup_{\X\in S_{n,\r}}\Cfree(\X)}{\inf_{\X\in [0,1]^n}\Cfree(\X)}~,
~~~~~~~~~~
 \PoS(n,\r) = \frac{\inf_{\X\in S_{n,\r}}\Cfree(\X)}{\inf_{\X\in [0,1]^n}\Cfree(\X)}~. $$

Our second approach to measuring client costs is to allow also scenarios with failures, and associate a large but
finite cost $\discost\gg 1$ with client disconnection.
Hence for $y\in[0,1]$ we define the cost
for a client located in $y$ under a given set of faults $\F$
(the set of disconnection points) as follows.
A server $x$ is said to be {\em accessible} from $y$ under $\F$
if the line segment between $x$ and $y$ contains no point from $\F$.
For $\X\in[0,1]^n$, denote the set of servers accessible from $y$ under $\F$
by $\X_\F(y)$. If $\X_\F(y)=\emptyset$ then $y$ is {\em disconnected}.
Define the {\em access cost} from $y$ to $\X$ under $\F$ as
$$d_\F(\X,y) ~=~
\begin{cases}
  d(\X_\F(y),y), & \X_\F(y) \ne \emptyset~, \\
  \discost, & y~\mbox{ is disconnected}~.
\end{cases}$$
The \emph{access cost under fault set $\F$}, $C_\F(\X)$, is
$$C_\F(\X)=\int_0^1 d_\F(\X,y)dy~.$$
Letting $Y^c_\F(\X)$ and $Y^{dc}_\F(\X)$ denote the set of connected
(resp., disconnected) clients of $\X$ under failure set $\F$,
the access cost can be separated into
$$C_\F(\X) = C_\F^{c}(\X) + C_\F^{dc}(\X),$$
where $C_\F^{c}(\X)$ is the {\em transportation cost} of the connected clients
$Y^c_\F(\X)$, and $C_\F^{dc}(\X)$ is the {\em disconnection cost} incurred by
the disconnected clients $Y^{dc}_\F(\X)$.

\section{Main Results}
\label{sec:results}

\paragraph{Characterization of Stable Profiles.}
Let us first characterize the profiles that lead to a Nash equilibrium for the fault-prone Hotelling game.
Given an integer $n\geq 2$ and a real $\r > 0$, the pair $(H,M)$, $H,M\in[0,1]$, is called a \emph{canonical pair} if $H$ and $M$ satisfy the following equations:
\begin{align}
\label{eq:optimal-H}
&  e^{\lambda(M-H)} = \frac{1+\lambda M}2
\\
\label{eq:total-regions}
&  2H+(n-1)M = 1
\\[1ex]
\label{eq:minimal-lambda}
&  \r H > \ln2
\end{align}
\noindent The canonical pair induces a profile $\Xnr$ for the game $\fthotel(n,\r)$, such that
$$ x^{n,\r}_i= H+(i-1)M$$
for every $i \in N$ (see Figure~\ref{fig:NE-shape}). We refer to this profile as the {\em canonical profile}.

\begin{figure}[ht]
\includegraphics[width=\textwidth]{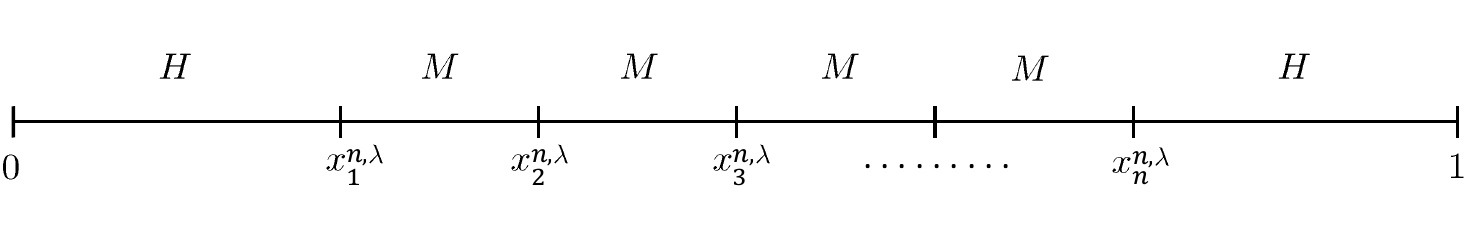}
\caption{The canonical profile.}
\label{fig:NE-shape}
\end{figure}

\begin{remark}
Whenever it exists, the canonical pair is uniquely defined for $n$ and $\r$. However, we have no closed formula for $H$ and $M$ as a function of $n$ and $\r$. In spite of this fact, our analysis does yield a parametric representation for $M$ as a function of $\r$. Namely,
$$(\r(t),M(t)) ~=~
            \left(\,(n+1)t-2\ln\left((1+t)/2\right)~
                        ,~1/\left({n+1-2\ln\left((1+t)/2\right)}/t\right)\,\right)~.$$
\end{remark}

\paragraph{Existence of Nash Equilibria.}
Our first result is that the fault-prone Hotelling game for two players always has a unique Nash equilibrium. Moreover, if a canonical profile exists, then it is the unique Nash equilibrium.

\begin{theorem}\label{thm:NE-2players}
    For $n=2$ players, the game $\fthotel(2,\r)$ has a unique Nash equilibrium, which is given by
    $$ \X^*=\left\{
				\begin{array}{ll}
					\;\;\Xnr\;,		         & \mbox{if } \r>2\ln2~; \\
                    \left(\frac12,\frac12\right),    & \mbox{otherwise.}
				\end{array}
			\right. $$
\end{theorem}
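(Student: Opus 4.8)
The plan is to reduce the two-player game to a one-dimensional best-response analysis, and then to match the threshold for the existence of a distinct (non-colocated) equilibrium to the value $2\ln2$.

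First I would compute the two expected half-markets in closed form, exploiting the separability noted right after the definition of the payoff. If a player sits at $x$ and its competing neighbour lies at distance $\ell$ to the right, a client at distance $d\in(0,\ell)$ buys from $x$ exactly when $x$ is reachable (no fault in between, probability $e^{-\r d}$) and either $d<\ell/2$ or the neighbour is itself unreachable; integrating over $d$ gives the right half-market payoff $\Ebb[D^r](\ell)=\frac{1-e^{-\r\ell}}{\r}-\frac{\ell}{2}e^{-\r\ell}$, and the left half-market is identical by symmetry. A peripheral player whose hinterland has length $H$ collects $\frac{1-e^{-\r H}}{\r}$, since a hinterland client is served iff the player is reachable. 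Because each term depends only on the length of the adjacent region, player $1$'s payoff splits into a hinterland term plus a half-market term, and I can analyse its position on each side of player $2$ separately.

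Next I would study the best response. Fixing player $2$ at $p$ and writing player $1$'s payoff for $z<p$ as $\frac{1-e^{-\r z}}{\r}+\Ebb[D^r](p-z)$, differentiation gives $u_1'(z)=e^{-\r z}-\frac12 e^{-\r(p-z)}\bigl(1+\r(p-z)\bigr)$, which is strictly decreasing, so $u_1$ is concave on $[0,p)$. Since $1+t\le e^{t}$ gives $u_1'(0)>0$ always, while $u_1'(p^-)=e^{-\r p}-\frac12$, the left branch has an interior maximiser iff $\r p>\ln2$, and setting $u_1'=0$ reproduces \eqref{eq:optimal-H}. I would then characterise equilibria by type. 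In any equilibrium with distinct locations $x_1<x_2$ each player's location is a within-branch maximiser, and being interior (not $0$, as $u_1'(0)>0$), it satisfies a first-order condition; writing both and comparing yields $e^{-\r x_1}=e^{-\r(1-x_2)}$, hence $x_2=1-x_1$. Setting $H=x_1,\ M=x_2-x_1$ recovers \eqref{eq:total-regions} and \eqref{eq:optimal-H}, and the stationarity relation gives $e^{-\r H}=\frac12(1+\r M)e^{-\r M}<\frac12$ as soon as $M>0$, so \eqref{eq:minimal-lambda} is automatic, and every distinct equilibrium is the canonical profile. Substituting $M=1-2H$ into \eqref{eq:optimal-H} defines a function $F(H)$ with $F(0)>0$ and $F(1/2)=e^{-\r/2}-\frac12$; a short sign analysis of $F'$ shows $F$ is unimodal on $[0,1/2]$, so $F$ has a unique root in $(0,1/2)$ iff $F(1/2)<0$, i.e. iff $\r>2\ln2$. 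For colocated profiles at a shared point $q$, an infinitesimal move captures the larger hinterland unless $q=1/2$, so the center is the only colocated candidate, and it is an equilibrium iff $u_1'(1/2^-)=e^{-\r/2}-\frac12\le0$, i.e. iff $\r\le2\ln2$.

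Assembling these facts yields the dichotomy: for $\r\le2\ln2$ no canonical pair exists, so there is no distinct equilibrium, while the center is one; for $\r>2\ln2$ the canonical pair exists and is the only candidate for a distinct equilibrium, whereas the center fails the best-response test. What remains, and what I expect to be the main obstacle, is to confirm that the canonical profile is a genuine equilibrium, namely that player $1$'s interior left-branch optimum beats the two remaining deviations. Colocation yields $\frac12\bigl(\frac{1-e^{-\r p}}{\r}+\frac{1-e^{-\r(1-p)}}{\r}\bigr)$ with $p=1-H\ge\frac12$, which is at most $\frac{1-e^{-\r p}}{\r}$, itself a lower bound for the left-branch optimum, so colocation is not profitable. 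For the jump-over deviation (player $1$ moving to the right of player $2$) I would use the reflection $y\mapsto1-y$, which maps player $1$'s right-branch problem against an opponent at $1-H$ onto its left-branch problem against an opponent at $H$; since the left-branch optimal value is monotone increasing in the opponent's position (the maximiser for the smaller position stays feasible and $\Ebb[D^r]$ is increasing) and $1-H\ge H$, the left branch dominates. The symmetric argument for player $2$ then finishes the proof.
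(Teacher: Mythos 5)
Your proposal is correct and, at its core, follows the same route as the paper: both arguments rest on the strict concavity of a peripheral player's payoff over the interval up to its neighbour, the signs of the derivative at the two endpoints (always positive at $0$, equal to $e^{-\r s}-\tfrac12$ at the neighbour), and the observation that equating the two players' first-order conditions forces $x_1=1-x_2$; your treatment of the colocated centre via $u_1'(1/2^-)$ is exactly the content of the paper's Lemma~\ref{lem:opt-hinterland}. Where you differ is in filling in what the paper leaves implicit: its proof ends with ``clearly, no improving move exists,'' whereas you actually verify the two remaining deviations (colocation, bounded above by $\EH{p}$ and hence by the left-branch optimum, and jumping over the opponent, handled by reflection plus monotonicity of the optimal hinterland value in the opponent's position), and you establish existence and uniqueness of the canonical pair for $n=2$ by direct root-counting for $F(H)$ rather than via the paper's reparametrization $\a=\r M$. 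Your derivation of the half-market payoff by integrating client-by-client reachability probabilities is a legitimate alternative to the paper's conditioning on the number of faults and yields the same formulas. Two small blemishes, neither fatal: the condition for the colocated centre to be an equilibrium should read $u_1'(1/2^-)\ge 0$ (your ``$\le 0$'' is equivalent to $\r\ge 2\ln2$, the opposite of the conclusion you correctly state and use afterwards); and $F$ is convex rather than unimodal, so ``unique root iff $F(1/2)<0$'' needs the extra observation --- which you in fact make --- that any root automatically satisfies $\r H>\ln2$, which rules out a pair of roots when $\r\le 2\ln2$.
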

\noindent
For $n\ge 3$
players, we show that the only possible Nash equilibrium is the canonical profile.

\begin{theorem}\label{thm:NE-form}
  For $n\geq 3$ players, if the game $\fthotel(n,\r)$ has a Nash equilibrium,
  then it is unique and equal to the canonical profile $\Xnr$.
\end{theorem}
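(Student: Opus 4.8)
The plan is to show that every Nash equilibrium must display the three features defining the canonical profile---no colocated servers, a single common interior gap, and two equal hinterlands related by \eqref{eq:optimal-H}---and then to argue that the induced system pins down a unique profile. I write $\mean{\M}(\cdot)$ and $\mean{\H}(\cdot)$ for the expected interior and hinterland half-markets as functions of the relevant gap; each depends only on that gap (as already noted), $\mean{\M}(0)=\mean{\H}(0)=0$, and from the closed forms induced by the Poisson model both functions are strictly increasing and strictly concave.

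First I would rule out colocations. Suppose $k\ge 2$ servers occupy a common point $z$, so each earns a $1/k$ share of the combined value of the two regions flanking the cluster. I would argue that some member can strictly increase its payoff by an $\epsilon$-displacement. The key mechanism is strict concavity: moving a server $\epsilon$ into an adjacent interior gap of length $g$ replaces its shared contribution by $\mean{\M}(\epsilon)+\mean{\M}(g-\epsilon)$, which already exceeds the \emph{whole} of that flank's value $\mean{\M}(g)$. Combining this with the $1/k\le 1/2$ sharing loss, a suitable displacement is strictly profitable in every configuration except that of exactly two servers flanked on both sides by hinterlands. Crucially, two hinterland flanks force the cluster to comprise all $n$ servers, so this exceptional case requires $k=n$ and hence cannot occur when $n\ge 3$. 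This is exactly where the analysis departs from the two-player game, whose center is stable below the threshold (Theorem~\ref{thm:NE-2players}).

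With colocations excluded, the servers sit at distinct ordered positions $x_1<\cdots<x_n$. For an interior player $i$, holding its neighbors fixed, its payoff over the open slot $(x_{i-1},x_{i+1})$ is $\mean{\M}(x-x_{i-1})+\mean{\M}(x_{i+1}-x)$, which is symmetric about the midpoint and strictly concave in $x$, hence maximized only there; since an equilibrium position must in particular be a within-slot maximizer, every interior player lies midway between its neighbors, forcing all consecutive gaps to a single value $M$. Applying the same within-slot optimization to the leftmost and rightmost players gives $\mean{\H}'(H_\ell)=\mean{\M}'(M)$ and $\mean{\H}'(H_r)=\mean{\M}'(M)$; since $\mean{\H}'$ is strictly decreasing this yields $H_\ell=H_r=:H$, and substituting the closed forms turns the condition into \eqref{eq:optimal-H}, while summing all region lengths gives \eqref{eq:total-regions}.

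It remains to solve \eqref{eq:optimal-H}--\eqref{eq:total-regions} uniquely and to check \eqref{eq:minimal-lambda}. Writing $t=\lambda M$, eliminating $H$ through \eqref{eq:total-regions} and rearranging \eqref{eq:optimal-H} yields $\lambda H=t-\ln\tfrac{1+t}{2}$ and therefore $\lambda=(n+1)t-2\ln\tfrac{1+t}{2}$, the parametrization recorded in the Remark. Its derivative $(n+1)-\tfrac{2}{1+t}$ is strictly positive for $t>0$ and all $n\ge 2$, so the map $t\mapsto\lambda$ is a strictly increasing bijection onto $(2\ln 2,\infty)$; hence each admissible $\lambda$ determines a unique $t$, a unique pair $(H,M)$, and the unique induced profile $\Xnr$. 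Condition \eqref{eq:minimal-lambda} then needs no separate work, since $\lambda H=t-\ln\tfrac{1+t}{2}>\ln 2$ is equivalent to $t>\ln(1+t)$, which holds for all $t>0$. The main obstacle is the colocation step: unlike the clean first-order arguments for the gaps and hinterlands, it requires a case analysis on the types of the regions flanking a cluster and care that the deviation strictly improves at a finite $\epsilon$ (not merely in the limit), the borderline two-hinterland case being exactly what separates $n\ge 3$ from $n=2$.
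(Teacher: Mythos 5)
Your proposal is correct and follows essentially the same route as the paper's: exclude colocation (Claim~\ref{clm:no-colocated}, using that a two-server cluster flanked by two hinterlands forces $k=n$), place interior players at midpoints and peripheral players at the interior first-order optimum of the hinterland trade-off (Lemmas~\ref{lem:opt-internal} and~\ref{lem:opt-hinterland}), deduce a common gap $M$ and equal hinterlands satisfying \eqref{eq:optimal-H}--\eqref{eq:minimal-lambda}, and obtain uniqueness of the canonical pair from the strictly monotone parametrization $\lambda=(n+1)t-2\ln((1+t)/2)$ (Claim~\ref{clm:reparam} and Observation~\ref{obs:monotone-lambda}). The only cosmetic difference is in the two-colocated-servers case, where you invoke superadditivity of $\EM{\cdot}$ under an $\epsilon$-displacement while the paper sends the deviator to the midpoint of the adjacent internal region; both close the same gap.
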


Additionally, we have found sufficient and necessary conditions over the canonical pair $H$ and $M$ for the existence of a Nash equilibrium (see Section~\ref{sec:NE-existence}). For any given $n$ and $\r$, we can derive $H$ and $M$, and determine whether the canonical profile is a Nash equilibrium using those conditions. However, our goal is to find a simple description for the set of values of $\r$ and $n$ for which a Nash equilibrium exists. In fact, our main theorem shows
that there exists a \emph{threshold function} $\r_{\min}(n)$ such that the game $\fthotel(n,\r)$ admits a Nash equilibrium if and only if $\r \geq \r_{\min}(n)$. Moreover, we show there exists a global constant $\a_0$ from which follows an \emph{exact} formulation of the threshold function $\r_{\min}(n)$. While we do not have an exact value of $\a_0$, we know that it is given by the intersection of two implicit functions (see Equations~\eqref{eq:alpha-beta3} and \eqref{eq:alpha-beta4}), which may be approximated to arbitrary precision. Moreover, any upper (resp., lower) bound for $\a_0$ translates to an upper (resp., lower) bound for $\r_{\min}(n)$.

\begin{theorem}\label{thm:NE-lower-bound}
There exists a constant $\a_0>0$ such that the game $\fthotel(n,\r)$, for $n\geq3$ players,
admits a Nash equilibrium if and only if
        $$\r ~\geq~ \r_{\min}(n) ~=~ (n+1)\a_0-2\ln\left(\frac{1+\a_0}2\right)~.$$
Moreover, $\a_0 \approx 0.58813$, implying
that a Nash equilibrium exists if and only if
$$\r ~\geq~ \r_{\min}(n) ~\approx~ 0.58813n+1.04931~.$$
\end{theorem}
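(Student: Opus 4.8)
The plan is to reduce the existence question, for each fixed $n$, to a single-variable monotonicity statement using the parametric representation given in the Remark, and then to isolate the one server-deviation inequality that governs stability.

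\medskip

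\noindent\textbf{Step 1: Parametrize the candidate equilibrium.}
By Theorem~\ref{thm:NE-form}, for $n\ge 3$ the only possible Nash equilibrium is the canonical profile $\Xnr$, so it suffices to decide, for each $(n,\r)$, whether the canonical profile is stable. First I would adopt the substitution $t = M - H$ (or equivalently the parameter $t$ appearing in the Remark) and rewrite the defining equations \eqref{eq:optimal-H}, \eqref{eq:total-regions}, and the constraint \eqref{eq:minimal-lambda} purely in terms of $t$. From the parametric pair $(\r(t), M(t))$ one reads off that $\r$ is a strictly increasing function of $t$ on the relevant domain, so that choosing $\r$ is equivalent to choosing $t$, and the inequality $\r \ge \r_{\min}(n)$ will translate into $t \ge \a_0$ for a threshold $\a_0$ to be identified. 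The hinterland $H$ and the gap $M$ are then explicit rational-and-logarithmic functions of $t$ and $n$.

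\medskip

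\noindent\textbf{Step 2: Write the stability condition as one inequality.}
Using the separation of payoffs into left and right half-markets $\Ebb[D_i^\ell]$, $\Ebb[D_i^r]$ (each depending only on the relevant region length and on $\r$), I would compute the payoff $u_i$ at the canonical profile and compare it against the payoff from the most tempting deviation. The binding deviations are the standard Hotelling ones: a peripheral server moving inward, or an internal server jumping adjacent to a neighbor to capture a larger expected market. Because the canonical profile already equalizes neighbor distances and hinterlands, all the interior first-order conditions are satisfied by construction of \eqref{eq:optimal-H}; what remains is to verify that no \emph{discrete} jump is profitable. This should collapse to a single inequality of the form $g(t,n) \ge 0$, where $g$ compares the canonical payoff to the best deviation payoff, and where the dependence on $n$ enters only through $H(t,n)$ and $M(t,n)$.

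\medskip

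\noindent\textbf{Step 3: Reduce the $n$-dependence and extract $\a_0$.}
The key structural claim I would aim for is that the governing inequality $g(t,n)\ge 0$ is, after the substitution from Step 1, \emph{equivalent to a threshold on $t$ that is independent of $n$}, namely $t \ge \a_0$. Concretely, I expect the terms that depend on $n$ to cancel or factor out, leaving a transcendental inequality in $t$ alone whose boundary $g(t,\cdot)=0$ is attained at a unique root $t=\a_0$; monotonicity of $g$ in $t$ (to be checked by differentiating) then gives the ``if and only if.'' Identifying $\a_0$ as the intersection of the two implicit curves \eqref{eq:alpha-beta3}--\eqref{eq:alpha-beta4} amounts to writing the binding-deviation equation and its boundary-of-feasibility equation simultaneously; solving them numerically yields $\a_0 \approx 0.58813$. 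Substituting $t=\a_0$ into $\r(t)$ from the Remark produces exactly $\r_{\min}(n) = (n+1)\a_0 - 2\ln\!\big((1+\a_0)/2\big)$, and plugging in the numerical value of $\a_0$ gives the stated approximation $\r_{\min}(n)\approx 0.58813\,n + 1.04931$.

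\medskip

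\noindent\textbf{Main obstacle.}
The hard part will be Step~3: proving that the $n$-dependence genuinely decouples so that the threshold lives in the single variable $t$, and that the stability inequality is monotone in $t$. A priori the best deviation could switch type (peripheral vs.\ internal) as $n$ or $t$ varies, so I would first classify which deviation is binding across the parameter range, then show that the worst case yields an inequality in which $n$ appears only as a positive multiplicative or additive factor that does not affect the sign. Establishing this clean separation, together with the uniqueness and monotonicity needed to turn the root $\a_0$ into a genuine threshold, is where the real work lies; the final substitution into the parametric formula for $\r(t)$ is then routine.
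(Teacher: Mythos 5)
Your overall architecture --- reduce to the canonical profile via Theorem~\ref{thm:NE-form}, reparametrize so that $\lambda$ is monotone in a single parameter, collapse stability to one inequality whose $n$-dependence cancels, and read off the threshold as the intersection of two implicit curves --- is indeed the paper's strategy. But there are two concrete gaps. First, you misidentify the binding deviation. You name ``a peripheral server moving inward, or an internal server jumping adjacent to a neighbor.'' The paper proves the opposite on both counts: by Lemma~\ref{lem:region-difference} a peripheral player's payoff at the canonical profile \emph{exceeds} an internal player's by exactly $e^{-\lambda M}/(2\lambda)$, so moving inward is never profitable; and by Claim~\ref{clm:no-colocated} together with Lemmas~\ref{lem:opt-hinterland} and~\ref{lem:opt-internal}, locating adjacent to a neighbor is dominated by the interior optimum of the target region, while all internal regions already yield the canonical internal payoff. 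The unique binding deviation is an \emph{internal} player relocating to the interior optimum $t^*$ of a \emph{hinterland}, giving the single condition $\EM{M}+\EM{M}\ge\EH{t^*}+\EM{H-t^*}$ of Observation~\ref{obs:NE-conditions}. Since your inequality $g(t,n)\ge 0$ is built by comparing against the best deviation, starting from the wrong deviation would produce the wrong threshold.

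Second, the $n$-decoupling and the uniqueness of the root are asserted rather than derived, and that is where the content lies. The decoupling is not a cancellation one can merely hope for: it happens because the condition of Observation~\ref{obs:NE-conditions}, after substituting Lemma~\ref{lem:region-difference} and Claim~\ref{clm:region-profit}, involves $H$, $M$, $t^*$ only through the products $\alpha=\lambda M$ and $\beta=\lambda(H-t^*)$, and the relation tying $\beta$ to $\alpha$, coming from \eqref{eq:optimal-H} and \eqref{eq:opt-hinterland}, is itself $n$-free (Lemma~\ref{lem:alpha-beta}). One must then still prove that the curves \eqref{eq:alpha-beta3}--\eqref{eq:alpha-beta4} cross exactly once; the paper does this via a sign change at the endpoints of $(0,1)$ plus the derivative comparison $d\beta_2/d\alpha>d\beta_1/d\alpha$ at every intersection (using $2\beta_0\le\alpha_0<1$), and without such an argument the set of stable $\alpha$ need not be a half-line, so the ``if and only if'' would not follow. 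A smaller point: the parameter $t$ in the Remark is $\alpha=\lambda M$, not $M-H$; conflating the two would break your final substitution into $\lambda(t)$, even though the formula you quote for $\r_{\min}(n)$ happens to be the correct one for $t=\alpha$.
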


We conclude with the following corollary.
\begin{corollary}\label{cor:NE-profiles}
There exist global constants $c_{\min}$ and $c_{\max}$ such that if the game $\fthotel(n,\r)$ has a canonical pair $(H,M)$ then
	$$ H/M \geq 1-c_{\max}~, $$
and if $n\geq 3$ and the game admits a Nash equilibrium, then it also holds that
	$$ H/M \leq 1-c_{\min}~. $$
Moreover, $c_{\max} \approx 0.232$ and $c_{\min} \approx -0.392$.
\end{corollary}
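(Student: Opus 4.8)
The plan is to collapse the two-variable ratio $H/M$ into a function of a single parameter and then analyze that function. Following the parametrization in the Remark, set $t = \r M$. Taking logarithms in \eqref{eq:optimal-H} gives $\r(M-H) = \ln\frac{1+\r M}2 = \ln\frac{1+t}2$, hence $\r H = t - \ln\frac{1+t}2$. Dividing this identity by $\r M = t$ produces the key relation
\[
\frac HM ~=~ 1 - g(t), \qquad g(t) \;:=\; \frac{\ln\bigl((1+t)/2\bigr)}{t}~.
\]
The point is that $H/M$ depends on $t$ alone, with no separate dependence on $n$; the values of $n$ and $\r$ enter only in determining which $t$ are attainable, through the relation $\r(t) = (n+1)t - 2\ln\frac{1+t}2$ of the Remark. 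The corollary therefore reduces to bounding $g$ over the appropriate $t$-ranges, and the constants $c_{\max},c_{\min}$ come out as global (i.e.\ $n$-independent) extremal values of $g$.

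For the first inequality I examine $g$ on all of $(0,\infty)$, which is exactly the set of $t$ for which a canonical pair exists (translating \eqref{eq:optimal-H}--\eqref{eq:minimal-lambda} through $t=\r M$ leaves only $t>0$, the remaining constraints $0<H<\tfrac12$ and $0<M<\tfrac1{n-1}$ holding automatically). A short computation shows $g'(t)$ has the sign of $h(t) := \frac{t}{1+t} - \ln\frac{1+t}2$. Since $h'(t) = -t/(1+t)^2 < 0$ while $h(0)=\ln 2>0$ and $h(t)\to -\infty$, the function $h$ has a single root $t^*$, so $g$ is unimodal: increasing on $(0,t^*)$ and decreasing on $(t^*,\infty)$. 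Thus $g(t)\le g(t^*)=:c_{\max}$ for all $t>0$, which is precisely $H/M\ge 1-c_{\max}$. Substituting $h(t^*)=0$ into $g$ gives the clean value $g(t^*)=1/(1+t^*)$, and solving $h(t^*)=0$ numerically yields $t^*\approx 3.3$, whence $c_{\max}\approx 0.232$.

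For the second inequality I invoke Theorem~\ref{thm:NE-lower-bound} (for $n\ge 3$): an equilibrium exists iff $\r\ge \r_{\min}(n)=(n+1)\a_0-2\ln\frac{1+\a_0}2=\r(\a_0)$. Because $\r'(t)=(n+1)-\frac{2}{1+t}>0$ for $n\ge 2$, the map $t\mapsto\r(t)$ is a strictly increasing bijection, so ``equilibrium exists'' is equivalent to $t\ge\a_0$. It remains to minimize $g$ over $[\a_0,\infty)$. By the unimodality above, $g$ rises from $g(\a_0)$ to $g(t^*)$ and then falls back toward $\lim_{t\to\infty}g(t)=0$; since $g(\a_0)<0=\lim_{t\to\infty}g(t)$, the minimum on $[\a_0,\infty)$ is attained at the left endpoint. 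Setting $c_{\min}:=g(\a_0)$ gives $g(t)\ge c_{\min}$ and hence $H/M=1-g(t)\le 1-c_{\min}$; with $\a_0\approx 0.58813$ this is $c_{\min}\approx -0.392$.

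The two calculus steps are routine; the real care is in the bookkeeping that lets them apply. I must confirm that the attainable $t$-range is exactly $(0,\infty)$ for canonical pairs and exactly $[\a_0,\infty)$ for equilibria---this is where Theorem~\ref{thm:NE-lower-bound} and the monotonicity of $\r(t)$ do the work---and, most delicately, that the infimum of the unimodal $g$ over the half-line $[\a_0,\infty)$ sits at the endpoint $\a_0$ rather than escaping to infinity. This last point rests on the sign comparison $g(\a_0)<0=\lim_{t\to\infty}g(t)$, and is exactly what keeps $H/M$ bounded; indeed, allowing $t\to 0^+$ (where no equilibrium exists) would send $g\to-\infty$ and $H/M\to\infty$. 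I expect verifying this endpoint/asymptotic comparison to be the main obstacle.
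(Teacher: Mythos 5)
Your proposal is correct and follows essentially the same route as the paper: the substitution $t=\r M$ is exactly the paper's reparametrization $\a=\r M$, your $g(t)=\ln((1+t)/2)/t$ is the paper's $c(\a)$ from Claim~\ref{clm:reparam}, the first bound comes from the unimodal maximum $c_{\max}\approx 0.232$, and the second from Theorem~\ref{thm:NE-lower-bound} together with $c(\a)\geq c(\a_0)$ on $[\a_0,\infty)$. The only difference is that you spell out the ``simple calculus'' (the sign analysis of $h$, the identity $g(t^*)=1/(1+t^*)$, and the endpoint comparison $g(\a_0)<0=\lim_{t\to\infty}g(t)$) that the paper leaves implicit.
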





\section{Overview of the Analysis}
This section provides a brief summary of our analysis, which appears in full detail in Appendix~\ref{sec:analysis}. We proceed in three main steps: calculating the \emph{payoff functions} of players under faults, optimizing each player's payoff \emph{locally} (within the interval between its neighbors), and finally verifying that no player has an improving move \emph{globally} (which implies a Nash equilibrium).

\paragraph{Payoff Functions.}We start by calculating the players' payoffs (Section~\ref{sec:region-profit}),
considering the left and right regions of each player separately. If a fault occurred in a region then the payoff is simply the distance to the closest fault. Otherwise, the payoff is half the length of the region for an internal region and the entire length of the region for a hinterland region.
Denote the expected profit from a hinterland (respectively, an internal region) of length $d$ by $\EH{d}$ (resp., $\EM{d}$).
Using the law of total probability we obtain that
$\EH{d}=\left[1-e^{-\r d}\right]/\r$ and
$\EM{d}=\left[1-e^{-\r d}\left(1+\r d/2\right)\right]/\r$.
This gives a simple function for the payoff $u_i(\X)$ of player $i$ over any profile $\X$, as the sum of the profits from the two adjacent regions.
\paragraph{Local Optimization.} The next step towards finding the Nash equilibria of a game $\fthotel(n,\r)$ is to optimize player location locally (Section~\ref{sec:local-optimum}), i.e., maximize $u_i(\,\cdot\,,\Xm{i})$ within the interval $[x_i^{\ell},x_i^{r}]$ to ensure no player has an improving move within the interval between its neighbors.
As the function $u_i(t,\Xm{i})$ is continuous and concave for
$t\in[x_i^{\ell},x_i^{r}]$, finding the maximum is reduced to solving the equation $\partial \, u_i(t,\Xm{i})/ \partial t = 0$.

For an internal player $i$, the optimal location within $[x_i^{\ell},x_i^{r}]$ is the center of the interval, $(x_i^{r}-x_i^{\ell})/2$. It follows that in every Nash equilibrium profile all internal regions are equal. For a peripheral player $i$, the optimal location between its neighbor and the border of the line is given by an implicit function (Equation~\eqref{eq:optimal-H}), which has no closed-form solution for general $n$ and $\r$. For $n=2$, the locally optimal location is also globally optimal, yielding Theorem~\ref{thm:NE-2players}.
For $n\geq3$, these two results show that if a game $\fthotel(n,\r)$ admits a Nash equilibrium, then it must be the canonical profile $\Xnr$, implying Theorem~\ref{thm:NE-form}.


\paragraph{Global Stability.} To check the existence of a Nash equilibrium for a game $\fthotel(n,\r)$, it is left find the canonical profile $\Xnr$ by solving Equations~\eqref{eq:optimal-H}, \eqref{eq:total-regions} and \eqref{eq:minimal-lambda}, and check whether any player has an improving move. Actually, we show that it suffices to only check whether an internal player can improve its payoff by moving to the hinterland (as in Figure~\ref{fig:internal-improve}). Hence, the following condition is sufficient and necessary for $\Xnr$ to be a Nash equilibrium (Observation~\ref{obs:NE-conditions}):
$$ \EM{M}+\EM{M} \geq \max_{t\in[0,H]}\left(\EH{t}+\EM{H-t}\right)~.$$
However, there is no direct method to find an exact solution for Equations~\eqref{eq:optimal-H}, \eqref{eq:total-regions} and \eqref{eq:minimal-lambda} for general $n$ and $\r$. To circumvent this difficulty, we use a reparameterization of the variables $n$, $\r$, $H$ and $M$ that simplifies the analysis. Specifically, we define $\a=M\r$ and $c=1-H/M$ and show that the values of $c$, $M$, $H$ and $\r$ can be expressed in terms of $\a$ and $n$ as follows:
$c=\ln\left(\frac{1+\a}{2}\right) / \a$;
$M=\frac{1}{n+1-2c}$;
$H=\frac{1-c}{n+1-2c}$; and
$\r=\a(n+1)-2\ln\left(\frac{1+\a}{2}\right)$.
As a consequence, for every $n$ and $\a>0$ we may find $\r(\a)$, $M(\a)$ and $H(\a)$ such that $(H(\a),M(\a))$ is the canonical pair of the game $\fthotel(n,\r(\a))$. Plugging the above into Equations~\eqref{eq:optimal-H}, \eqref{eq:total-regions} and \eqref{eq:minimal-lambda} and the condition we obtained for a Nash equilibrium (Observation~\ref{obs:NE-conditions}), we find that for $\a>0$ the game $\fthotel(n,\r(\a))$ has a Nash equilibrium if and only if there exist $\b_1 \geq 0$ and $\b_2 \geq 0$ such that
\begin{align*}
	e^{-\a}(1+\a) &=e^{-2\b_1}(1+\b_1)~,  \\
	e^{-\a}\left(1+\frac\a2\right) &=e^{-\b_2}\left(\frac34+\frac{\b_2}2\right)~,  \\
	\b_1 &\leq \b_2~.
\end{align*}
The first two equations each implicitly defines a curve, and we prove these curves have a single point of intersection $(\a_0,\b_0)$. Therefore, the game $\fthotel(n,\r(\a))$ admits a Nash equilibrium if and only if $\a \geq \a_0$. Moreover, the canonical pair of this game is $(H(\a),M(\a))$, from which we construct its equilibrium $\X^{n,\r(\a)}$.
Finally, we show that $\a$ is monotonic in $\r$ and therefore the game $\fthotel(n,\r)$ has a Nash equilibrium if and only if $\r \geq \r(\a_0)$, and thus we obtain Theorem~\ref{thm:NE-lower-bound}.

\section{Efficiency of Equilibria}\label{sec:pos}
Hotelling's original paper illustrated the tension that exists between the stable configuration of servers and the socially optimal outcome, which takes into account also the interests of the clients. In the 2-player game, the only equilibrium is when both players locate themselves at the center. However, the socially optimal outcome for the clients is obtained in a configuration that places one player at $1/4$ and the other at $3/4$, as this configuration minimizes the average travel distance from a client to the closest server. In the 3-player game, Hotelling showed that no Nash equilibrium exists, so there is no stable configuration, let alone a socially optimal one.

\paragraph{Efficiency of Equilibria in Failure-Free Scenarios.}
Our model suggests that the presence of possible failure has also some unexpected \emph{positive} side-effects. In the 2-player game, depending on the failure rate $\r$, the distance between the players may increase while preserving stability, reducing $\Cfree(\X)$, the total transportation cost of clients (when no faults occur). Hence, with probability $e^{-\r}$, the system admits an instance where no faults occur, and the cost is reduced as a side effect of the players' reaction to faults.



In the 3-player game, the inclusion of random faults creates a Nash equilibrium, which does not exist otherwise. Here, the price of disconnecting some of the clients from servers is less problematic, since without faults the game has no stable configuration at all.


\begin{figure}[ht]
    \begin{center}
        \includegraphics[width=0.6\textwidth]{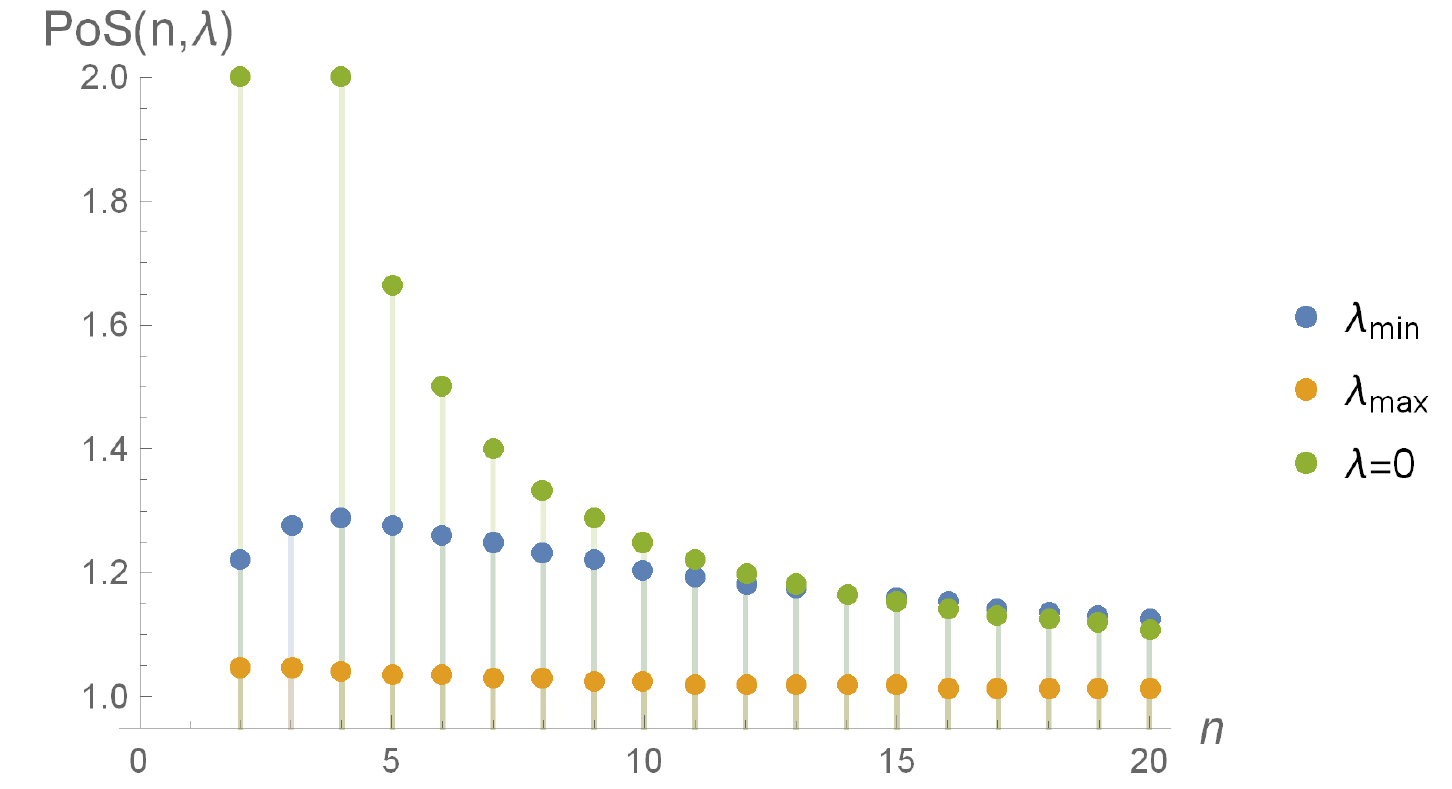}
    \end{center}
    \caption{The price of stability of the games $\fthotel(n,\r_{\min})$, $\fthotel(n,\r_{\max})$ and $\fthotel(n,0)$ under various $n$ values.}
    \label{fig:price-of-stability}
\end{figure}
For general $n$, we compare the price of stability obtained by the canonical profile for different values of $\r$. Namely, we consider the following three interesting values of $\r$ for every $n$.

\vspace{10pt}
\noindent
(1)
$\r = \r_{\min}$, as defined in Theorem~\ref{thm:NE-lower-bound}, which is the minimal value of $\r$ such that $\fthotel(n,\r)$ has a Nash equilibrium. Additionally, for $\r_{\min}$, the distance $M$ between neighboring players is the minimal such distance that allows an equilibrium (by Corollary~\ref{cor:NE-profiles}).

\vspace{10pt}
\noindent
(2)
$\r = \r_{\max}$, as defined at the end of Section~\ref{sec:canonical-profile}, which maximizes the possible distance $M$ between players in the canonical profile $\Xnr$ (by Corollary~\ref{cor:NE-profiles}).

\vspace{10pt}
\noindent
(3)
$\r = 0$, which, with a slight abuse of notation, refers to the Hotelling game with no faults.

\vspace{10pt}
The price of anarchy and price of stability for the fault-free Hotelling game $\fthotel(n,0)$ was studied by Fournier and Scarsini~\cite{fournier2016hotelling}. They showed that
$$
\hbox{$
\PoA(n,0) = \left\{\begin{array}{ll}
    						2 				& \quad \mbox{if $n$ is even,}\\
                            2\frac{n}{n+1}	& \quad \mbox{if $n>3$ is odd,}
    					\end{array}\right.
$}
~~~~~~~~\mbox{and}~~~~~~~~
\hbox{$
\PoS(n,0) = \left\{\begin{array}{ll}
    						n				& \quad \mbox{for $n=2$,} \\
                            \frac{1}{n-2}	& \quad \mbox{for $n\geq4$.}
    					\end{array}\right.
$}
$$
It is known that
$ C\left(\X^{\textsf{OPT}}\right)=1/(2n)$,
where $\X^{\textsf{OPT}}$ is the social optimum configuration.
For $\r \geq \r_{\min}$, by Theorems~\ref{thm:NE-form} and \ref{thm:NE-lower-bound}, the unique Nash equilibrium is the canonical profile, so
$$ \PoS(n,\r) = \PoA(n,\r) = \frac{\Cfree(\Xnr)}{\Cfree(\X^{\textsf{OPT}})}~. $$
Within a hinterland of length $H$, the average travel distance for a client is $H/2$, and within an internal region, the average travel distance is $M/4$. Hence,
	$$ \Cfree(\Xnr) = \frac{\frac{H}{2} \cdot 2H+\frac{M}{4} \cdot (n-1) M}{2H+(n-1)M}
     			= H^2 + \frac{n-1}{4}M^2~, $$
and thus
	$$ \PoS(n,\r) = \PoA(n,\r) = 2n \left( H^2 + \frac{n-1}{4}M^2 \right)~. $$
By Corollary~\ref{cor:NE-profiles}, since $\PoS(n,\r)$ depends only on $H$, $M$ and $n$, all possible values of $\PoS(n,\r)$ are represented in the range $ \r_{\min} \leq \r \leq \r_{\max} $. Moreover,
	$$ \PoS(n,\r_{\max}) \leq \PoS(n,\r) \leq \PoS(n,\r_{\min})~,  $$
since the average travel distance $\Cfree(\Xnr)$ decreases the larger $M$ is. Therefore, we compare\footnote{We do not include $\PoA(n,0)$ in this comparison because it is much larger than $\PoS(n,\r)$.} $\PoS(n,\r_{\max})$ and $\PoS(n,\r_{\min})$ to the non-faulty $\PoS(n,0)$ in Figure~\ref{fig:price-of-stability} above.

Figure~\ref{fig:price-of-stability} shows that $\PoS(n,\r_{\max})$ is close to 1, i.e., the client cost is close to optimal, and that $\PoS(n,\r_{\min})$ is clearly smaller than $\PoS(n,0)$ for $n \leq 13$. Note, however, that for $\r=\r_{\max}$, the probability of no fault occurring on the line is very small (less than $e^{-3n}$).

\paragraph{Efficiency of Equilibria Under Failures.}
So far we have considered how the clients are affected by the players
\emph{reactions} to possible failures. Next, we measure the effect of the failures \emph{themselves} on the costs incurred on the clients. To this end we quantify the cost to a disconnected client as $\discost$, and consider how this cost affects the clients in Nash equilibria of both the failure-free Hotelling game and the fault-prone Hotelling game proposed in this paper.

\begin{wrapfigure}{r}{0.35\textwidth}
\vspace{-10pt}
\centering
\includegraphics[width=0.35\textwidth]{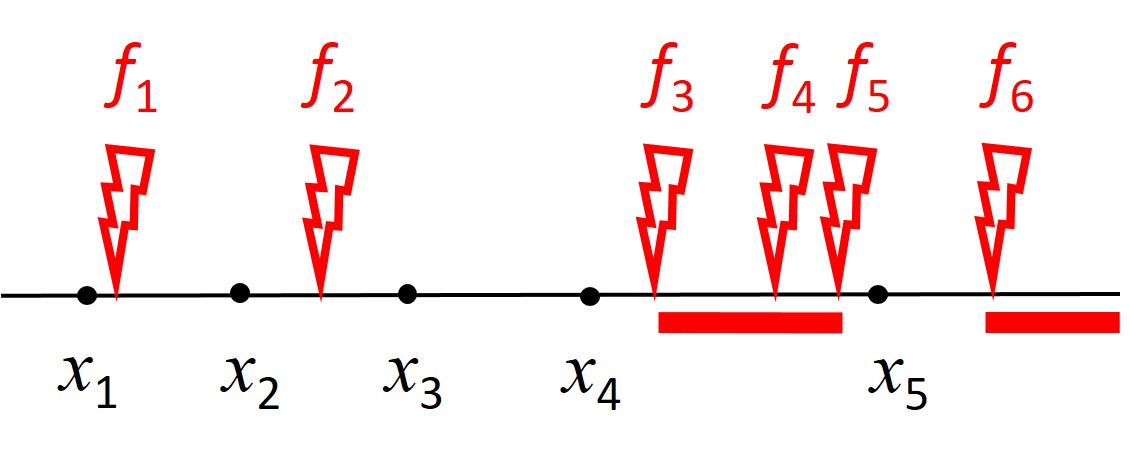}
\end{wrapfigure}
Note that the set of disconnected points consists of a finite number of
continuous segments (see figure).
Let $\cL^{dc}_\F(\X)$ denote the total length of these segments (hereafter referred to as the \emph{disconnected fraction}), and let
$\cL^{c}_\F(\X) = 1-\cL^{dc}_\F(\X)$ denote the total length of the complementary
segments, namely, the union of all the markets, containing the connected clients.

Notice that the transportation cost of each client $y\in Y^c_\F(\X)$ satisfies
$d(\X_\F(y),y) \le 1$, so their total transportation cost is
$C_\F^{c}(\X) \le \cL^c_\F(\X) \le 1$.
On the other hand, the total disconnection cost of the disconnected clients is
$C_\F^{dc}(\X) = \psi\cdot\cL^{dc}_\F(\X)$.
It follows that when $\discost\gg 1$,
the difference between different profiles in terms of their total access cost
under a fault set $\F$ is dominated by their disconnection cost, which in turn
is proportional to their $\cL^{dc}_\F(\X)$ value, namely,
the proportion of disconnected clients.

For given $\X$ and failure rate $\lambda$, let
$\cL^{dc}_\lambda(\X)$ denote the expected value of $\cL^{dc}_\F(\X)$
when $\F$ is distributed according to the Poisson process
with rate $\lambda$ discussed above.

We are interested in comparing the costs $\cL^{dc}_\lambda(\X)$
of three different profiles $\X$.
The first, denoted $\OPT$ in Figure~\ref{fig:welfare}, is the profile that
minimizes $\cL^{dc}_\lambda(\X)$. (This profile might not attain equilibrium.)
The second, denoted $\NE$ in Figure~\ref{fig:welfare},
is the canonical profile $\Xnr$ attaining Nash equilibrium in
the game $\fthotel(n,\r)$ (i.e., assuming failure rate $\lambda$).
The third, denoted $\FF$ in Figure~\ref{fig:welfare},
is the profile that minimizes $\cL^{dc}_\lambda(\X)$ from among the profiles $\X$ attaining Nash equilibrium in the game $\fthotel(n,0)$.

Note that $\OPT$ and $\NE$ depend on both $n$ and $\r$ while $\FF$ depends on $n$ alone. Hence, in each graph of Figure~\ref{fig:welfare}, $\FF$ relates to a fixed profile under different rates $\r$ of faults, while $\OPT$ and $\NE$ relate to different profiles for different values of $\r$.
\begin{figure}[b]
\centering
\begin{tabular}{ccc}
\includegraphics[width=0.3\textwidth]{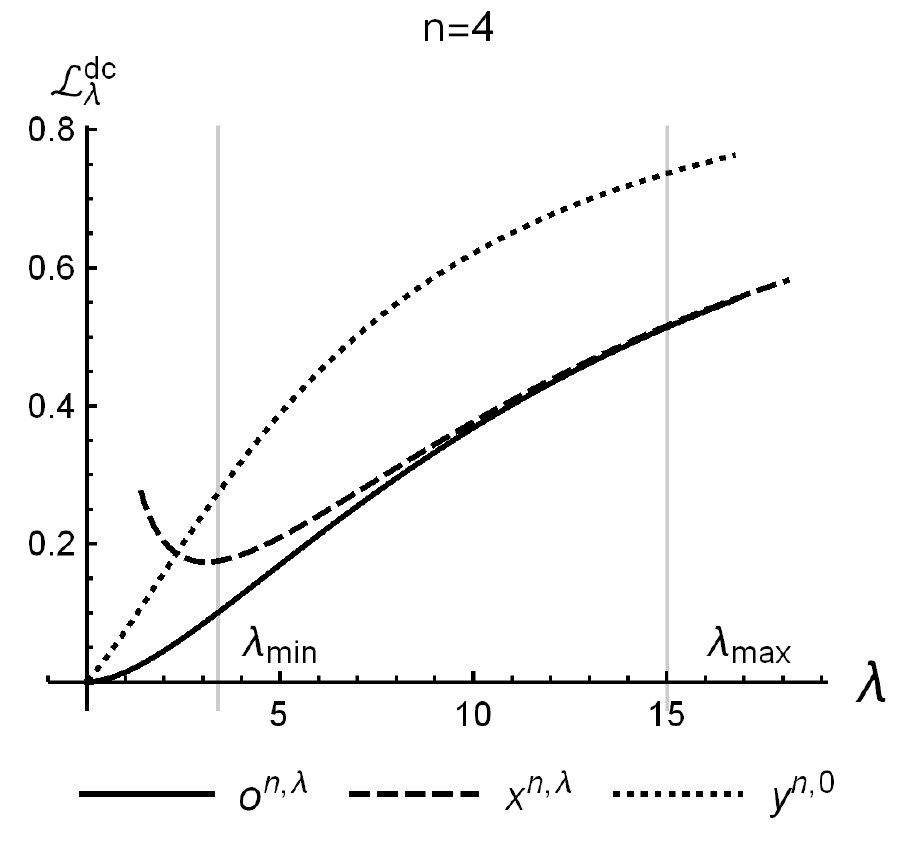}&
\includegraphics[width=0.3\textwidth]{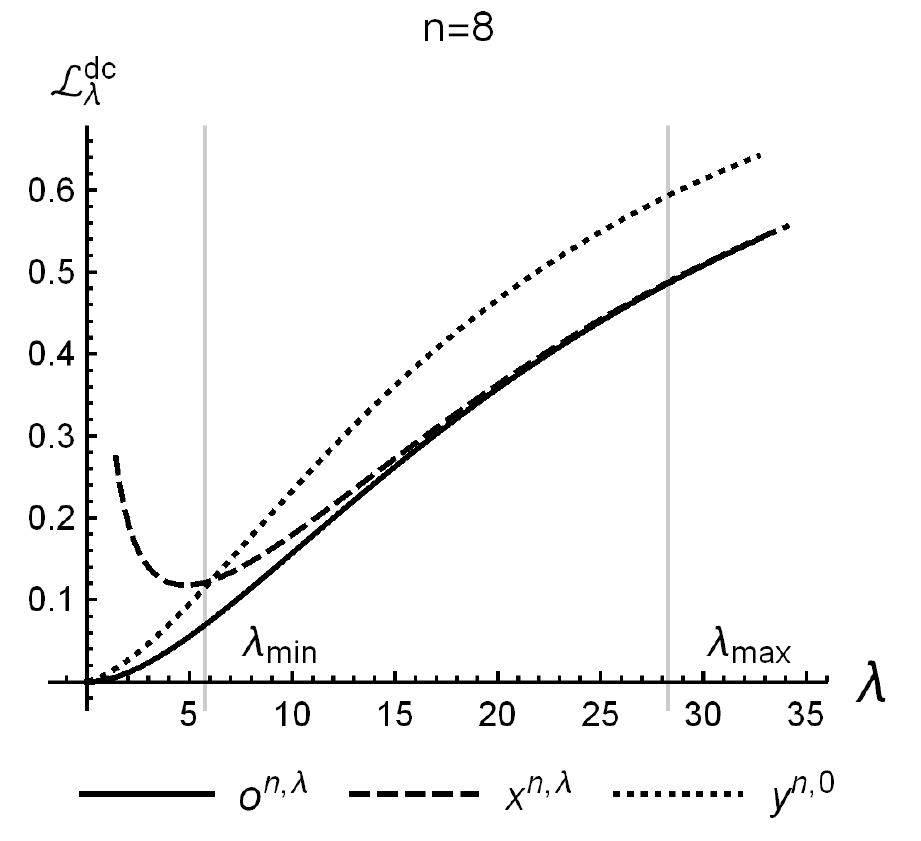}&
\includegraphics[width=0.3\textwidth]{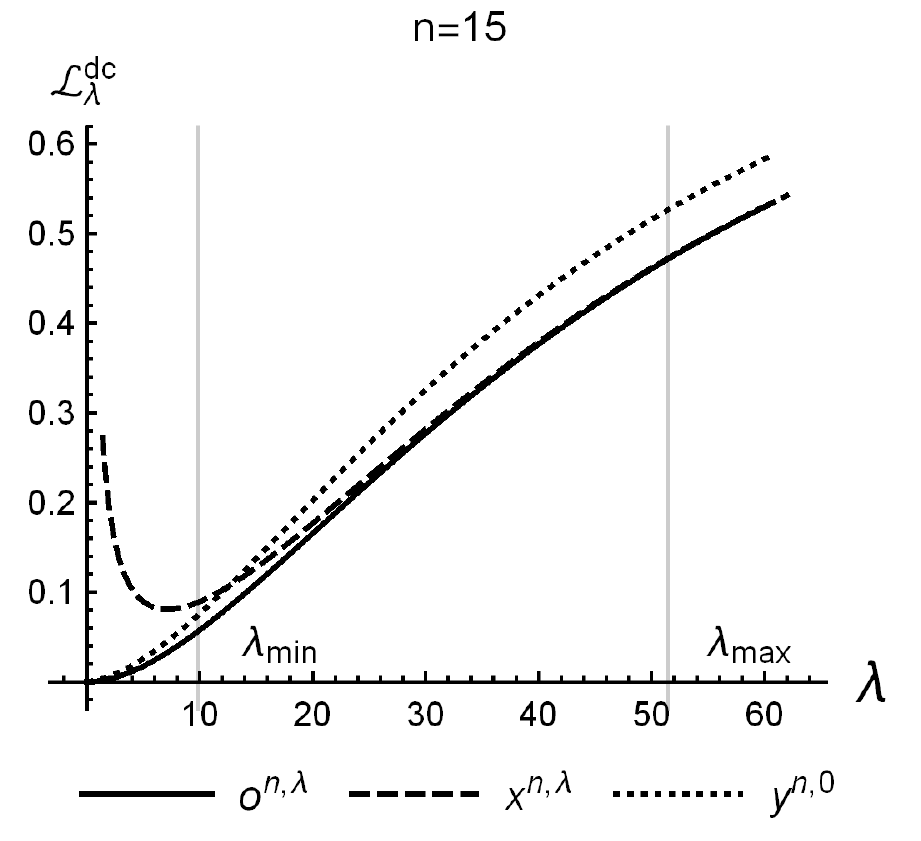}\\
		(a) & (b) & (c)
\end{tabular}
\caption{The disconnected fraction for the profiles $\OPT$, $\NE$ and $\FF$
under various $\lambda$ values.
(a) $n=4$.
(b) $n=8$.
(c) $n=15$.}
\label{fig:welfare}
\end{figure}

While we do not have closed-form formulae for the values of
$\cL^{dc}_\lambda(\OPT)$, $\cL^{dc}_\lambda(\NE)$ and \linebreak $\cL^{dc}_\lambda(\FF)$,
we can calculate these values for different values of $\lambda$ and $n$.
Figure~\ref{fig:welfare} presents graphically the comparisons for $n=4,8,15$.
As can be seen in the figure, our profile $\NE$ does better than $\FF$
(and thus better than all other Nash equilibria of $\fthotel(n,0)$ as well)
once the failure rate $\lambda$ increases, and its costs are close to those
of $\OPT$.

The reason for the gap between $\OPT$ and $\FF$ is that
in all equilibria of the failure-free game $\fthotel(n,0)$
the peripheral servers must be colocated with their neighbors.
This means that there are at most $n-2$ distinct server locations.
Thus, from the perspective of the clients there are two servers fewer in $\FF$ than in $\OPT$.
Indeed, simulation results show that the disconnected fraction under $\FF$ with $n$ players
closely approximates the disconnected fraction under $\bm{o}^{n-2,\r}$ with $n-2$ players
for all values of $\r$.

In conclusion, for sufficiently large $\r$ total client costs would be reduced under $\fthotel(n,\r)$ compared to the failure-free game whether faults occur or not. That is, clients would benefit from players reaction to possible faults even if no such possibility exists in actuality.

\section{Discussion}

Our results show that when players account for the possibility of faults in their behavior the properties of the game may be impacted in several interesting ways. In some settings a Nash equilibrium emerges where none existed before. In others, the change in player behaviors gives rise to a unique Nash equilibrium where infinitely many existed before. Moreover, in many settings the social cost is reduced when compared to the Nash equilibria of the original game (whether failures occur or not). Overall, our results imply player consideration of faults may introduce stability to the game, and in some cases even reduce social costs. Specifically in the context of Hotelling games, our results give a possible simple explanation for cases where the principle of minimal differentiation does not hold. That is, players prefer to clearly differentiate from their competitors rather than clustering together.

There are many possible directions for future work. First, one may examine whether our results generalize to different distributions of faults, such as other spatial point processes. Another interesting variation would be to have faults delay the passage of clients rather than disconnecting the line entirely. Second, a natural yet challenging extension of our work is when the game is played on networks. A vast literature exists on the fault tolerance of networks, and the motivation for such an extension is clear. Finally, our game may be extended to two-dimensional space. Inspired by models of wireless networks, instead of faults one may surround each player with a circle of a randomly distributed radius, within which it obtains the clients closer to it than other players. This models the range of influence of each player, which may be limited due to ineffective marketing for example. It would be interesting to discover if such games admit an equilibrium, since no equilibrium exists for failure-free Hotelling games in two-dimensional space with more than two players.

\section*{Acknowledgements}
The authors would like to thank Yinon Nahum for many fertile discussions and helpful insights.


\bibliographystyle{plain}




\clearpage
\centerline{\LARGE \bf Appendix}

\appendix

\section{The Poisson Process}\label{sec:poisson-process}
In this section we present a few well known results that are necessary for our proofs. Namely, we discuss the Poisson process, which we use as our fault model. The following definition and theorems are taken from Mitzenmacher and Upfal~\cite{mitzenmacher2005probability}.

The Poisson process is a stochastic counting process which considers a sequence of random events. Let $Q(t)$ denote the number of events in the interval $[0,t]$. The locations of these events are called \emph{arrival times}. Formally, a \emph{Poisson process} with parameter $\lambda$ is a stochastic counting process $\{Q(t),t \geq 0\}$ such that the following statements hold.
\begin{enumerate}
    \item $Q(0)=0$.
    \item The process has independent and stationary increments. That is, for any $t,s>0$, the distribution of $Q(t+s)-Q(s)$ is identical to the distribution of $Q(t)$, and for any two disjoint intervals $[t_1,t_2]$ and $[t_3,t_4]$, the distribution of $Q(t_2)-Q(t_1)$ is independent of the distribution of $Q(t_4)-Q(t_3)$.
    \item $\lim_{t \to 0}{\Prb(Q(t)=1)/t}=\lambda$. That is, the probability of a single event in a short interval tends to $\lambda t$.
    \item $\lim_{t \to 0}{\Prb(Q(t)\geq 2)}=0$. That is, the probability of more than one event in a short interval tends to zero.
\end{enumerate}

\begin{theorem}\label{thm:poisson-dist}
    Let $\{Q(t)\mid t\geq 0 \}$ be a Poisson process with parameter $\lambda$. For any $t,s \geq 0$ and any integer $k \geq 0$,
    \[
        \Prb(Q(t+s)-Q(s)=k)= e^{-\lambda t}\frac{(\lambda t)^n}{n!}~.
    \]
\end{theorem}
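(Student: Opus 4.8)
The plan is to reduce the claim to computing the distribution of $Q(t)$ alone, and then to derive and solve a system of ordinary differential equations for the probabilities $P_k(t) = \Prb(Q(t)=k)$. First I would invoke the stationarity of increments (axiom 2): since $Q(t+s)-Q(s)$ has the same distribution as $Q(t)-Q(0)=Q(t)$, it suffices to prove that $P_k(t)=e^{-\lambda t}(\lambda t)^k/k!$ for every integer $k\geq 0$ and every $t\geq 0$.

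Next I would extract the infinitesimal behaviour of the process from axioms 3 and 4. For a short interval of length $h$ these give $\Prb(Q(h)=0)=1-\lambda h + o(h)$, $\Prb(Q(h)=1)=\lambda h + o(h)$, and $\Prb(Q(h)\geq 2)=o(h)$. Combining these with the independence of increments (axiom 2), I would condition on the number of events in the final sub-interval $[t,t+h]$ to obtain the forward equations. For $k=0$ this yields $P_0(t+h)=P_0(t)(1-\lambda h)+o(h)$, and for $k\geq 1$ it yields $P_k(t+h)=P_k(t)(1-\lambda h)+P_{k-1}(t)\lambda h + o(h)$, where the $o(h)$ collects the contribution of two or more events in the interval of length $h$.

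Rearranging and letting $h\to 0$ then produces the differential equations $P_0'(t)=-\lambda P_0(t)$ and $P_k'(t)=-\lambda P_k(t)+\lambda P_{k-1}(t)$ for $k\geq 1$, subject to the initial conditions $P_0(0)=1$ and $P_k(0)=0$ for $k\geq 1$ (which follow from axiom 1). Solving the first equation gives $P_0(t)=e^{-\lambda t}$, and I would then complete the proof by induction on $k$: assuming $P_{k-1}(t)=e^{-\lambda t}(\lambda t)^{k-1}/(k-1)!$, the equation for $P_k$ becomes a linear first-order ODE whose unique solution with the stated initial condition is exactly $e^{-\lambda t}(\lambda t)^k/k!$, closing the induction.

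I expect the main obstacle to be the rigorous justification of the passage from the limit-based axioms to the forward equations, in particular controlling the aggregate error arising from the $\Prb(Q(h)\geq 2)$ terms so that it genuinely vanishes as $h\to 0$, and ensuring the $P_k$ are differentiable rather than merely satisfying difference relations. An alternative that sidesteps the ODE bookkeeping would be to partition $[0,t]$ into $m$ equal sub-intervals, approximate $Q(t)$ by a $\mathrm{Binomial}(m,\lambda t/m)$ count using independence and stationarity, and take the classical Poisson limit as $m\to\infty$; the delicate point there is bounding the probability that some sub-interval contains two or more events, which is precisely where axiom 4 is needed.
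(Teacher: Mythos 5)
The paper does not actually prove this statement: Theorem~\ref{thm:poisson-dist} is imported verbatim (with a typo --- the exponent and factorial should read $k$ rather than $n$) from Mitzenmacher and Upfal, so there is no in-paper argument to compare against. Your proposal is the standard forward-equation derivation that such textbooks give, and it is correct in outline: reduce to $Q(t)$ by stationarity, derive $P_0'=-\lambda P_0$ and $P_k'=-\lambda P_k+\lambda P_{k-1}$ from the infinitesimal axioms together with independence of increments, and solve by induction on $k$ (the inductive step is the elementary integrating-factor computation $\bigl(e^{\lambda t}P_k\bigr)'=\lambda(\lambda t)^{k-1}/(k-1)!$). One point worth making explicit: the version of axiom~4 as printed in the paper, $\lim_{t\to 0}\Prb(Q(t)\geq 2)=0$, is strictly weaker than what your argument uses and genuinely needs, namely $\Prb(Q(h)\geq 2)=o(h)$; the mere vanishing of this probability would not let you absorb the multi-event terms into an $o(h)$ error in the forward equations. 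The $o(t)$ form is what the cited source actually assumes, so your reading is the right one, but a careful write-up should flag that you are using the stronger (standard) axiom. The remaining technicalities you identify --- passing from the one-sided difference quotient to genuine differentiability of $P_k$, and uniformly controlling the $\Prb(Q(h)\geq 2)$ contribution across the conditioning --- are real but routine, and your fallback via the Binomial approximation on a partition of $[0,t]$ is an equally valid alternative.
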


\begin{theorem}\label{thm:poisson-uniform}
    Given that $Q(t)=k$, the $k$ arrival times have the same distribution as that of $n$ independent random variables with uniform distribution over $[0,t]$.
\end{theorem}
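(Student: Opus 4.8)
The plan is to prove the conditional-uniformity property by computing the joint density of the arrival times given $\{Q(t)=k\}$ directly and recognizing it as the density of the order statistics of $k$ independent uniform variates on $[0,t]$. I would work from the alternative characterization of the Poisson process mentioned in the model description, in which the interarrival gaps $X_1,X_2,\ldots$ are i.i.d.\ exponential random variables with rate $\lambda$, so that the $j$-th arrival time is $T_j = X_1+\cdots+X_j$. The first step is to obtain the joint density of $(T_1,\ldots,T_k)$: since the $X_j$ have joint density $\lambda^k e^{-\lambda(x_1+\cdots+x_k)}$ on the positive orthant and the linear change of variables $t_j = x_1+\cdots+x_j$ has unit Jacobian (with $\sum_j x_j = t_k$), the vector $(T_1,\ldots,T_k)$ has density $\lambda^k e^{-\lambda t_k}$ on the ordered simplex $0 < t_1 < \cdots < t_k$.

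Next I would condition on the event $\{Q(t)=k\}$, which is equivalent to $\{T_k \leq t < T_{k+1}\}$: there are exactly $k$ arrivals in $[0,t]$ precisely when the $k$-th arrival occurs by time $t$ and the next gap carries the $(k+1)$-th arrival past $t$. Using the memorylessness of the gap $X_{k+1}$ (equivalently, the independent-increments axiom), the probability of no further arrival in $(t_k,t]$ is $e^{-\lambda(t-t_k)}$. Multiplying, the joint density of $(T_1,\ldots,T_k)$ restricted to $\{Q(t)=k\}$ equals $\lambda^k e^{-\lambda t_k}\cdot e^{-\lambda(t-t_k)} = \lambda^k e^{-\lambda t}$ on the simplex, which is constant in $(t_1,\ldots,t_k)$. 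Dividing by $\Prb(Q(t)=k) = e^{-\lambda t}(\lambda t)^k/k!$ (Theorem~\ref{thm:poisson-dist}) yields the conditional density $k!/t^k$ on $0 < t_1 < \cdots < t_k < t$.

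Finally, I would observe that $k!/t^k$ is exactly the joint density of the order statistics of $k$ i.i.d.\ $\mathrm{Uniform}[0,t]$ random variables: the $k$ uniforms have density $t^{-k}$ on the cube $[0,t]^k$, and pooling the $k!$ symmetric orderings onto the simplex multiplies this by $k!$. Hence the ordered arrival times have the law of the uniform order statistics, which is precisely the assertion that, conditioned on $Q(t)=k$, the $k$ arrival times are distributed as $k$ independent uniform points on $[0,t]$.

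The one point requiring care is the distinction between the ordered arrival times and the unordered collection of uniform samples: I would state the conclusion in terms of the unordered set of arrival times (or, equivalently, match order statistics to order statistics) so that the identification of densities is exact rather than off by the combinatorial factor $k!$. The main obstacle is handling the conditioning cleanly, since $\{Q(t)=k\}$ mixes the continuous law of $(T_1,\ldots,T_k)$ with the tail event $\{T_{k+1}>t\}$; it is the memorylessness / independent-increments property that lets the joint expression factor as $\lambda^k e^{-\lambda t_k}\cdot e^{-\lambda(t-t_k)}$, after which everything reduces to the routine density computation above.
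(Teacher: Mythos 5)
Your proof is correct, but note that the paper itself offers no proof of this statement: Theorem~\ref{thm:poisson-uniform} is imported verbatim from Mitzenmacher and Upfal as a known fact, so there is no in-paper argument to compare against. Your derivation is the standard one --- obtain the joint density $\lambda^k e^{-\lambda t_k}$ of the ordered arrival times from the i.i.d.\ exponential gaps via a unit-Jacobian change of variables, factor in $e^{-\lambda(t-t_k)}$ for the event $\{T_{k+1}>t\}$ using independence of the $(k+1)$-st gap, divide by $\Prb(Q(t)=k)$ to get the constant conditional density $k!/t^k$ on the simplex, and identify this as the law of the order statistics of $k$ i.i.d.\ uniforms on $[0,t]$ --- and you correctly handle the $k!$ bookkeeping between ordered and unordered samples, which is the one place such arguments usually go wrong. (Incidentally, the ``$n$'' in the theorem statement is a typo for ``$k$'', as your proof implicitly assumes.)
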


\section{Analysis}\label{sec:analysis}

\subsection{Calculating Payoffs}\label{sec:region-profit}

In this section we focus on calculating the payoffs of each player for any given profile.

We start by analyzing two simple cases which can be easily extended to the general setting. Suppose the game is played on the interval $[0,d]$ for $d > 0$ and player 1 is located at $x_1=0$. $\H(d)$ is a random function indicating the profit of player 1 when there are no other players in the game, i.e., $[0,d]$ is a hinterland. $\M(d)$ is a random function indicating the profit of player 1 when the game has two players and player 2 is located at $x_2=d$, i.e., $[0,d]$ is an internal region. Additionally, $Q(d)$ is a random function indicating the number of faults in the interval $[0,d]$. Hence, $Q(d)$ is a Poisson process with parameter $\r$ as defined in Section~\ref{sec:poisson-process}. It is easy to calculate
the expectations of $\H(d)$ and $\M(d)$ restricted to the case where
$k$ faults occur in the line segment. Clearly, when no faults occur we have
\begin{equation}\label{eq:zero-faults}
\Ebb[\H(d)\mid Q(d)=0]=d \quad \mbox{ and } \quad  \Ebb[\M(d)\mid Q(d)=0]=\nicefrac{d}{2}~.
\end{equation}
For $Q(d)>0$ we have the following.
\begin{claim}\label{clm:k-uniform-faults}
For $k>0$ and $d>0$,
\[
\Ebb[\H(d)\mid Q(d)=k]=\Ebb[\M(d)\mid Q(d)=k]=\frac{d}{k+1}~.
\]
\end{claim}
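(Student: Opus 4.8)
The plan is to reduce both conditional expectations to one and the same elementary order-statistic computation, the main work being a preliminary observation that, conditioned on at least one fault occurring, player~1's profit is \emph{identical} in the hinterland and the internal-region settings. First I would fix $k>0$ and invoke Theorem~\ref{thm:poisson-uniform}: conditioned on $Q(d)=k$, the $k$ fault locations are distributed as $k$ independent uniform random variables on $[0,d]$. Let $f_{(1)}$ denote the leftmost of these (the first order statistic).

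The key step, and the conceptual crux, is to argue that in both scenarios the profit of player~1 equals exactly $f_{(1)}$. In the hinterland case this is immediate: player~1 sits at $0$ with no competitor, so it captures every client up to the first fault to its right, i.e.\ the segment $[0,f_{(1)}]$, of length $f_{(1)}$. In the internal-region case player~2 sits at $d$, but since $f_{(1)}<d$, the fault at $f_{(1)}$ blocks every client in $[0,f_{(1)})$ from reaching player~2 while leaving player~1 accessible; conversely every client beyond $f_{(1)}$ is cut off from player~1. Hence player~1's market is again exactly $[0,f_{(1)}]$, regardless of where the competitor or the Voronoi midpoint lie. This is the only place where the distinction between $\H(d)$ and $\M(d)$ could have mattered, and it is precisely the presence of a fault that erases it; the split at the midpoint governs only the fault-free case already recorded in Equation~\eqref{eq:zero-faults}.

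It then remains to compute $\Ebb[f_{(1)}]$ for the minimum of $k$ i.i.d.\ uniform$[0,d]$ variables. I would use the tail formula $\Ebb[f_{(1)}]=\int_0^d \Prb(f_{(1)}>v)\,dv$ together with $\Prb(f_{(1)}>v)=(1-v/d)^k$, which gives $\int_0^d (1-v/d)^k\,dv = d/(k+1)$. Since, by the previous paragraph, both $\Ebb[\H(d)\mid Q(d)=k]$ and $\Ebb[\M(d)\mid Q(d)=k]$ equal this common value, the claim follows. The integral is entirely routine; the only genuine content is the reduction establishing that the two markets coincide once a fault is present, so I do not anticipate any serious obstacle.
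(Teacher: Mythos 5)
Your proposal is correct and follows essentially the same route as the paper: both reduce the claim, via Theorem~\ref{thm:poisson-uniform}, to the observation that once $k>0$ faults occur the profit in either scenario is the distance to the nearest fault, and then compute the expected minimum of $k$ i.i.d.\ uniform variables on $[0,d]$ by the tail formula $\int_0^d(1-t/d)^k\,dt=d/(k+1)$. Your write-up merely spells out in more detail the step the paper states in one line, namely that the presence of a fault makes the hinterland and internal markets coincide.
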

\begin{proof}
By Theorem~\ref{thm:poisson-uniform}, given that $Q(d)=k$, the $k$ faults have the same distribution as $k$ independent random variables with uniform distribution over $[0,d]$. Furthermore, since $k>0$, both $\H(d)$ and $\M(d)$ are simply the distance to the closest fault. So our question boils down to the following: suppose $k>0$ faults $Y_1,\ldots,Y_k$ are uniformly distributed on the interval $[0,d]$, what is the expectation of $\min(Y_i)$? Formally we have
$$
\Ebb[\H(d)\mid Q(d)=k]=\Ebb[\min(Y_i)]=\int_0^\infty{\Prb[\min(Y_i)>t] \, dt}=\int_0^d{\Prb[\min(Y_i)>t] \, dt}~.
$$
Since the $Y_i$'s are independent and uniformly distributed in the interval $[0,d]$ we get
$$
\Prb[\min(Y_i)>t]= \prod_i\Prb[Y_i > t]= \Big(\frac{d-t}{d}\Big)^k~,
$$
and therefore
$$
    \Ebb[\H(d)\mid Q(d)=k] =\int_0^d{\Big(\frac{d-t}{d}\Big)^k \, dt}
        =\left[-\frac d{k+1}\Big(\frac{d-t}{d}\Big)^{k+1}\right]_0^d =\frac d{k+1}~.
$$
\end{proof}
By Theorem~\ref{thm:poisson-dist}, for all $d\geq 0$ and $k\geq0$,
\begin{equation}\label{eq:k-faults-prob}
    \Prb[Q(d)=k]=e^{-\r d}\frac{(\r d)^k}{k!}~.
\end{equation}

In sake of brevity, for any random variable $X$, we define $\mean{X}=\Ebb[X]$.
We may now find $\EH{d}=\Ebb[\H(d)]$ and $\EM{d}=\Ebb[\M(d)]$ using the law of total probability.
\begin{claim}\label{clm:region-profit}
   $\displaystyle \quad \EH{d}=\frac{1}{\r} \left[1-e^{-\r d}\right] \quad \mbox{and } \quad
      \EM{d}=\frac{1}{\r} \left[1-e^{-\r d}\left(1+\frac{\r d}{2}\right)\right]~.  $
\end{claim}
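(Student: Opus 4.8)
The plan is to compute both expectations via the law of total probability, conditioning on the number of faults $Q(d)$ in the segment $[0,d]$ and summing over $k \geq 0$ the conditional expectations (from Equation~\eqref{eq:zero-faults} and Claim~\ref{clm:k-uniform-faults}) weighted by the Poisson probabilities $\Prb[Q(d)=k]$ of Equation~\eqref{eq:k-faults-prob}. For the hinterland profit I would first note a convenient coincidence: the formula $\Ebb[\H(d)\mid Q(d)=k]=d/(k+1)$ of Claim~\ref{clm:k-uniform-faults} also returns the correct no-fault value $d$ at $k=0$, so a single uniform sum over all $k\geq 0$ suffices.

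Writing $\EH{d} = \sum_{k\geq 0}\frac{d}{k+1}\,e^{-\r d}\frac{(\r d)^k}{k!} = d\,e^{-\r d}\sum_{k\geq 0}\frac{(\r d)^k}{(k+1)!}$, the one nontrivial step is evaluating this series. I would reindex it by multiplying and dividing by $\r d$, turning it into a shifted exponential series $\frac{1}{\r d}\sum_{j\geq 1}\frac{(\r d)^j}{j!} = \frac{1}{\r d}(e^{\r d}-1)$. Substituting back immediately gives $\EH{d} = \frac{1}{\r}(1-e^{-\r d})$, as claimed.

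For the internal region I would avoid redoing the sum by exploiting that $\M(d)$ and $\H(d)$ share the same conditional expectation for every $k\geq 1$ (both equal the distance to the nearest fault, by Claim~\ref{clm:k-uniform-faults}), differing only in the zero-fault term, where $\Ebb[\M(d)\mid Q(d)=0]=d/2$ against $\Ebb[\H(d)\mid Q(d)=0]=d$. Hence $\EM{d}-\EH{d} = (d/2 - d)\,\Prb[Q(d)=0] = -\frac{d}{2}e^{-\r d}$, and combining this with the hinterland result yields $\EM{d} = \frac{1}{\r}(1-e^{-\r d}) - \frac{d}{2}e^{-\r d} = \frac{1}{\r}[1 - e^{-\r d}(1+\r d/2)]$.

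There is no genuine obstacle beyond the series bookkeeping: the single thing to get right is the index shift $\sum_{k\geq 0}(\r d)^k/(k+1)! = (e^{\r d}-1)/(\r d)$, after which both identities fall out algebraically. The subtraction trick for $\EM{d}$ then isolates exactly where the internal-region payoff departs from the hinterland one, namely the halving of the profit in the fault-free event.
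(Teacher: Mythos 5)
Your proposal is correct and follows essentially the same route as the paper's proof: the law of total probability with the Poisson weights, the index shift $\sum_{k\ge 0}(\r d)^k/(k+1)!=(e^{\r d}-1)/(\r d)$ for the hinterland, and the observation that $\M(d)$ and $\H(d)$ differ only in the zero-fault term, giving $\EM{d}=\EH{d}-\frac{d}{2}e^{-\r d}$. No gaps.
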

\begin{proof}
    By Claim~\ref{clm:k-uniform-faults} and Equations~\eqref{eq:zero-faults} and \eqref{eq:k-faults-prob}, we get
    \begin{align*}
        \Ebb[\H(d)] & =\sum_{k=0}^{\infty}\left(\Ebb[\H(d)\mid Q(d)=k]\cdot\Prb[Q(d)=k]\right)
         \; =\sum_{k=0}^{\infty}\left(\frac{d}{k+1}\cdot e^{-\r d}\frac{(\r d)^k}{k!}\right)\\
         & =e^{-\r d} \cdot \sum_{k=0}^{\infty}\frac{\r^k d^{k+1}}{(k+1)!}
         \; =\frac{e^{-\r d}}{\r} \cdot \sum_{k=1}^{\infty}\frac{(\r d)^k}{k!}
         \; =\frac{e^{-\r d}}{\r} \left[\sum_{k=0}^{\infty}\frac{(\r d)^k}{k!}-1\right]\\
         & =\frac{e^{-\r d}}{\r} \left[e^{\r d}-1\right]
         \; =\frac{1}{\r} \left[1-e^{-\r d}\right]~. \numberthis \label{eq:exp-hinterland}
    \end{align*}
    Similarly, by Claim~\ref{clm:k-uniform-faults},
    \begin{align*}
        \Ebb[\M(d)] & =\sum_{k=0}^{\infty}\left(\Ebb[\M(d)\mid Q(d)=k]\cdot\Prb[Q(d)=k]\right)\\
            & =\Ebb[\M(d)\mid Q(d)=0]\cdot\Prb[Q(d)=0]
                        +\sum_{k=1}^{\infty}\left(\Ebb[\H(d)\mid Q(d)=k]\cdot\Prb[Q(d)=k]\right)~.
    \end{align*}
    Rearranging and applying Equations~\eqref{eq:zero-faults},\eqref{eq:k-faults-prob} and \eqref{eq:exp-hinterland}, we get
    \begin{align*}
         \Ebb[\M(d)] & =(\Ebb[\M(d)\mid Q(d)=0]-\Ebb[\H(d)\mid Q(d)=0])\cdot\Prb[Q(d)=0]+\Ebb[\H(d)]\\
         & =\Ebb[\H(d)]-\frac{d}{2}\,\Prb[Q(d)=0]
           =\frac{1}{\r} \left[1-e^{-\r d}\right]-\frac{d}{2}\,e^{-\r d}\\
         & =\frac{1}{\r} \left[1-e^{-\r d}\left(1+\frac{\r d}{2}\right)\right]~.
    \end{align*}
\end{proof}

Next, consider a profile $\X$ of the game. We may now calculate the payoff of each player using the analysis above. That is, we have the following.
\begin{claim}\label{clm:region-equiv}
    For every $i\in N$,
    \[
        \Ebb[R_i-x_i]=
                \left\{\begin{array}{ll}
                            \EH{1-x_i}~, & \mbox{if $i$ is right peripheral;} \\
                            \EM{x_i^r-x_i}~, & \mbox{otherwise,}
                        \end{array}
        \right.
    \]
        and
    \[
        \Ebb[x_i-L_i]=
                \left\{\begin{array}{ll}
                            \EH{x_i}~, & \mbox{if $i$ is left peripheral;} \\
                            \EM{x_i-x_i^\ell}~, & \mbox{otherwise.}
                        \end{array}
        \right.
    \]
\end{claim}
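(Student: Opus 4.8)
The plan is to reduce each half-market of player $i$ to one of the two single-region random variables $\H$ and $\M$ that were already analyzed in Section~\ref{sec:region-profit}, and then simply read off the expectations from Claim~\ref{clm:region-profit}. Since the model is invariant under the reflection $y\mapsto 1-y$ (which swaps left with right and left-peripheral with right-peripheral servers), it suffices to treat the right half-market $R_i-x_i$; the left half-market then follows by the mirror-image argument. Note also that the claim concerns the raw half-market $R_i-x_i$ rather than the payoff $u_i$, so colocation and the division by $\gamma_i$ play no role here. I split into the two cases of the statement: $i$ right peripheral, whose right region is the hinterland $[x_i,1]$, and $i$ internal to the right, whose right region is $[x_i,x_i^r]$ with $x_i^r$ the location of $i$'s right neighbor.

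The crux, and the step I expect to be the main obstacle, is a \emph{locality} claim: $R_i-x_i$ is a function of the restriction of $\F$ to the right region alone, together with the fixed endpoint ($1$ or $x_i^r$), and in particular it is unaffected both by faults lying beyond that region and by the locations of non-adjacent servers. For internal $i$ this requires two observations. First, no client $y>x_i^r$ can belong to $i$'s market: whenever $x_i$ is accessible from such a $y$ (no fault in $[x_i,y]$), the right neighbor is also accessible (no fault in $[x_i^r,y]\subseteq[x_i,y]$) and is strictly closer, so $y$ never selects $x_i$; this is precisely where one must rule out any contribution from disconnected clients situated past the neighbor. Second, for a client $y\in[x_i,x_i^r]$, accessibility of $x_i$ depends only on $\F\cap[x_i,y]$ and accessibility of the neighbor only on $\F\cap[y,x_i^r]$, both contained in the right region, so $y$'s allegiance — and hence the right boundary $R_i$ — is determined entirely by $\F\cap[x_i,x_i^r]$. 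For right-peripheral $i$ the same reasoning shows that $R_i$ is just the first fault of $\F$ in $[x_i,1]$, or $1$ if none occurs.

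Granting locality, the rest is a change of variables. By the stationary, independent-increments property of the Poisson process (Appendix~\ref{sec:poisson-process}), the shifted point set $(\F\cap[x_i,x_i^r])-x_i$ is itself a rate-$\r$ Poisson process on $[0,d]$ with $d=x_i^r-x_i$, independent of the faults outside the region. Under this identification the configuration governing $R_i-x_i$ is exactly the one defining $\M(d)$ in Section~\ref{sec:region-profit} (player $1$ at $0$, player $2$ at $d$): if at least one fault falls in $[0,d]$ then $R_i-x_i$ is the distance from $x_i$ to the nearest fault, and if none does then $x_i$ captures its half-region up to the midpoint so that $R_i-x_i=d/2$ — which matches $\M(d)$ case by case (cf.\ Equation~\eqref{eq:zero-faults} and Claim~\ref{clm:k-uniform-faults}). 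Hence $R_i-x_i$ and $\M(d)$ are equal in distribution, giving $\Ebb[R_i-x_i]=\EM{x_i^r-x_i}$. The peripheral case is identical with $\H$ in place of $\M$ and $d=1-x_i$, yielding $\Ebb[R_i-x_i]=\EH{1-x_i}$. Applying the reflection to obtain the left half-market and substituting the closed forms of Claim~\ref{clm:region-profit} completes the argument.
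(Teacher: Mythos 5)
Your proof is correct and follows essentially the same route as the paper: both reduce the half-market $R_i-x_i$ to the single-region variable $\M(d)$ (resp.\ $\H(d)$) by using the stationarity of the Poisson process to identify the faults in $[x_i,x_i+d]$ with a rate-$\r$ process on $[0,d]$, and then invoke Claim~\ref{clm:region-profit}. The only difference is cosmetic: you make the locality step (that $R_i$ is determined by $\F\cap[x_i,x_i^r]$ alone) explicit and conclude via equality in distribution, whereas the paper matches the conditional expectations term by term; this is a slightly more careful write-up of the same argument.
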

\begin{proof}
    Consider $\Ebb[R_i-x_i]$ first. Let $d=1-x_i$ if $i$ is right peripheral and $d=x_i^r-x_i$ otherwise.
    Since $Q(t)$ is a Poisson process, by definition, the number of faults within the interval $[x_i,x_i+d]$ is $Q(x_i+d)-Q(x_i)$ and has the same distribution as $Q(d)$. Hence, for all $k\geq 0$
    \begin{equation}\label{eq:shifted-prob}
        \Prb[Q(x_i+d)-Q(x_i)=k]=\Prb[Q(d)=k]~.
    \end{equation}
    Moreover, as in Equation~\eqref{eq:zero-faults}
    \begin{equation}\label{eq:zero-faults2}
        \Ebb[R_i-x_i \mid Q(x_i+d)-Q(x_i)=0]=\left\{\begin{array}{ll}
                                                    d~, & \mbox{if $i$ is right peripheral;} \\
                                                    \frac{d}{2}~, & \mbox{otherwise.}
                                                \end{array}
                                             \right.
    \end{equation}
    Finally, notice that when $k>0$ faults occur in $[x_i,x_i+d]$, by Theorem~\ref{thm:poisson-uniform}, they are uniformly distributed in the interval. Therefore, as in Claim~\ref{clm:k-uniform-faults} we obtain
    \begin{equation}\label{eq:k-uniform-faults2}
        \Ebb[R_i-x_i\mid Q(x_i+d)-Q(x_i)=k]=\frac{d}{k+1}~.
    \end{equation}
    Using the law of total probability and applying Equations~\eqref{eq:shifted-prob},\eqref{eq:zero-faults2} and \eqref{eq:k-uniform-faults2} we obtain the same terms as in the proof of Claim~\ref{clm:region-profit} and thus the result follows.

    The proof for $\Ebb[x_i-L_i]$ is similar. Note only that the analogue of Equation~\eqref{eq:k-uniform-faults2} follows from the symmetry of the uniform distribution.
\end{proof}

\subsection{Optimizing the Payoff Within an Interval}\label{sec:local-optimum}

As we show in this section, given a profile $\X$, for each server $i$ there exists a single point of maximum payoff within the interval between its neighbors.  Furthermore, when each server $i$ is located at that optimal point between its neighbors, the profile is the canonical profile. More specifically, by Lemma~\ref{lem:opt-internal}, for any internal server $i$, the optimal location is at the center of the interval $[x_i^\ell,x_i^r]$. Therefore, all internal regions are of equal length. For peripheral servers, by Lemma~\ref{lem:opt-hinterland}, the optimal location can be derived from Equation~\eqref{eq:opt-hinterland}.

Denote by $\theta(t,s)$ the expected market of a peripheral server with a hinterland of length $t$ and a neighbor at distance $s$ from the border of the line. By Claim~\ref{clm:region-profit},
$$\theta(t,s)= \EH{t}+\EM{s-t} = \frac{1}{\r} \left[1-e^{-\r t}\right]
            + \frac{1}{\r} \left[1-e^{-\r (s-t)}\left(1+\frac{\r (s-t)}{2}\right)\right]$$
for $ s\in[0,1] $ and $t \in [0,s]$. Denote by $\mu(t,s)$ the expected market of an internal server $i$ with $x_i-x_i^\ell=t$ and $x_i^r-x_i^\ell=s$. By Claim~\ref{clm:region-profit},
$$\mu(t,s)= \EM{t}+\EM{s-t} = \frac{1}{\r} \left[1-e^{-\r t}\left(1+\frac{\r t}{2}\right)\right]
            + \frac{1}{\r} \left[1-e^{-\r (s-t)}\left(1+\frac{\r (s-t)}{2}\right)\right]$$
for $ s\in[0,1] $ and $t \in [0,s]$.

\begin{remark}
    To keep $\theta$ and $\mu$ continuous in the interval $[0,s]$, we disregard the fact that for $t=s$ and $t=0$ the server is colocated with one of its neighbors, and assume all its payoff comes from the same interval of length $s$. As we show in Claim~\ref{clm:no-colocated}, this assumption does not affect the analysis of the Nash equilibria of the game.
\end{remark}

\begin{lemma}\label{lem:opt-hinterland}
    For every fixed $s\in[0,1]$ there exists a single optimal point $t^*\in[0,s]$ such that $\theta(t^*,s)=\max_t\theta(t,s)$. Moreover, if $\r s\leq \ln2$ then $\theta(t,s)$ is monotone increasing as a function of $t$ and thus $t^*=s$, and if $\r s > \ln2$ then $t^*$ satisfies the equation
    \begin{equation}\label{eq:opt-hinterland}
			e^{\lambda(s-2t^*)} = \frac{1+\lambda(s-t^*)}2~.
	\end{equation}
    Additionally, if $\r s > \ln2$ then $\r t^* > \ln 2$ as well.
\end{lemma}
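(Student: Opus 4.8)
The plan is to regard $\theta(\,\cdot\,,s)$ as a smooth one-variable function on the closed interval $[0,s]$ (the closed forms of $\EH{d}$ and $\EM{d}$ from Claim~\ref{clm:region-profit} are smooth in $d$, and $s-t$ ranges over $[0,s]\subseteq[0,1]$) and to analyze it through its first two derivatives. Differentiating gives
\[
\frac{\partial\theta}{\partial t}(t,s)=e^{-\r t}-\tfrac12 e^{-\r(s-t)}\bigl(1+\r(s-t)\bigr).
\]
Setting this to zero and multiplying through by $2e^{\r(s-t)}$ yields $2e^{\r(s-2t)}=1+\r(s-t)$, which is exactly Equation~\eqref{eq:opt-hinterland}; so any interior critical point satisfies the stated equation, and it remains to locate the maximizer and argue uniqueness.

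Uniqueness would follow from strict concavity. Differentiating once more,
\[
\frac{\partial^2\theta}{\partial t^2}(t,s)=-\r e^{-\r t}-\tfrac12\r^2(s-t)e^{-\r(s-t)}<0 \qquad\text{for all } t\in[0,s],
\]
since both summands are nonpositive and the first is strictly negative. Hence $\partial\theta/\partial t$ is strictly decreasing, so $\theta(\,\cdot\,,s)$ has a unique maximizer $t^*$ and at most one interior critical point. To pin down $t^*$ I would evaluate the first derivative at the endpoints. At $t=s$,
\[
\frac{\partial\theta}{\partial t}(s,s)=e^{-\r s}-\tfrac12,
\]
which is nonnegative precisely when $\r s\le\ln 2$; at $t=0$,
\[
\frac{\partial\theta}{\partial t}(0,s)=1-\tfrac12 e^{-\r s}(1+\r s)\ \ge\ \tfrac12>0,
\]
using $e^{-x}(1+x)\le 1$ for $x\ge 0$. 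If $\r s\le\ln 2$, the (strictly decreasing) derivative is nonnegative even at the right endpoint, hence positive on $[0,s)$, so $\theta$ is increasing and $t^*=s$. If $\r s>\ln 2$, the derivative is positive at $0$ and negative at $s$, so by strict monotonicity it has a unique interior zero $t^*\in(0,s)$, which satisfies Equation~\eqref{eq:opt-hinterland}.

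For the final claim, assume $\r s>\ln 2$ and set $a=\r t^*$ and $b=\r(s-t^*)$, so $b>0$ (as $t^*<s$) and $\r(s-2t^*)=b-a$. Equation~\eqref{eq:opt-hinterland} becomes $e^{b-a}=\tfrac12(1+b)$, i.e. $a=b-\ln\!\bigl(\tfrac{1+b}{2}\bigr)=\bigl(b-\ln(1+b)\bigr)+\ln 2$. Since $\ln(1+b)<b$ for every $b>0$, the bracketed term is strictly positive, whence $a=\r t^*>\ln 2$, as required. The only place demanding care is this last step: the inequality $\r t^*>\ln 2$ is not apparent from the implicit equation directly, and the clean route is the reparameterization $(a,b)$ together with the elementary bound $\ln(1+b)<b$. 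The rest is routine differentiation and an endpoint sign analysis, both secured by the global strict concavity.
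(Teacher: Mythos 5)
Your proof is correct, and for the main claims (uniqueness of the maximizer via strict concavity, the endpoint sign analysis splitting on whether $\r s\leq\ln 2$, and the derivation of Equation~\eqref{eq:opt-hinterland} from the first-order condition) it follows the paper's argument essentially verbatim. The one place you genuinely diverge is the final sub-claim $\r t^*>\ln 2$: the paper proves this by contradiction in two cases, first assuming $\r t^*<\ln 2$ and deriving $e^{\r(s-2t^*)}>\frac{1+\r(s-t^*)}{2}$ against Equation~\eqref{eq:opt-hinterland}, then separately ruling out $\r t^*=\ln 2$ by reducing to $e^{\r(s-t^*)}=1+\r(s-t^*)$, which forces $\r(s-t^*)=0$ and hence $\r s=\ln 2$. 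Your substitution $a=\r t^*$, $b=\r(s-t^*)$ turns the critical-point equation into the explicit identity $a=\bigl(b-\ln(1+b)\bigr)+\ln 2$ and reads off $a>\ln 2$ directly from $\ln(1+b)<b$ for $b>0$ (with $b>0$ guaranteed because $t^*$ is interior). This is shorter, avoids the case split, and yields the strict inequality in one stroke; the paper's version has the minor virtue of not needing the reparameterization, but your route is the cleaner of the two.
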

\begin{proof}
    Note that
    $$ \frac{\partial^2\,\theta(t,s)}{\partial t^2}=-\r e^{-\r t}-\frac{\r^2(s-t)}2e^{-\r(s-t)}<0~, $$
    for all $t\in[0,s]$. Hence, $\theta(t,s)$ is strictly concave as a function of $t$ and thus has a single maximum point. The first derivative is
    $$ \frac{\partial\,\theta(t,s)}{\partial t} = e^{-\r t}-\frac12 e^{-\lambda (s-t)}(1+\lambda (s-t))~.  $$
    Setting $t=0$ yields
    $$ \frac{\partial\,\theta(0,s)}{\partial t} = 1-\frac12 e^{-\lambda s}(1+\lambda s)~,  $$
    which is positive for all $s$ since $e^x>1+x$ for all real $x$.

    \noindent Setting $t=s$ yields
    $$ \frac{\partial\,\theta(s,s)}{\partial t} = e^{-\lambda s}-\frac12~, $$
    which is non-negative whenever $\r s \leq \ln2$. Since $\theta(t,s)$ is strictly concave, it follows that whenever $\r s \leq \ln2$ the function $\theta(t,s)$ is monotone increasing in $t$ for $t\in[0,s]$. Therefore, it has a single maximum point at $t^*=s$.

    It is left to consider the case where $\r s > \ln2$. In this case, the maximum is obtained where the derivative equals zero. Hence,
    $$ \frac{\partial \,\theta(t^*,s)}{\partial t} = e^{-\r t^*}-\frac12 e^{-\lambda (s-t^*)}(1+\lambda (s-t^*))=0~,  $$
    which translates to Equation~\eqref{eq:opt-hinterland}, as required.

    Finally, we show that in this case $\r t^* > \ln 2$. Assume first, towards contradiction, that $\r t^* < \ln 2$. Then
    \[
        e^{\r(s-2t^*)}=\frac{e^{\r(s-t^*)}}{e^{\r t^*}}>\frac{e^{\r(s-t^*)}}{e^{\ln 2}}\geq \frac{1+\r(s-t^*)}2~,
    \]
    contradicting Equation~\eqref{eq:opt-hinterland}. Now suppose $\r t^*=\ln2$. Then
    \[
        \frac{e^{\r(s-t^*)}}{e^{\ln 2}} = e^{r(s-2t^*)} = \frac{1+\r(s-t^*)}2~,
    \]
    or
    \[
        e^{\r(s-t^*)} = 1+\r(s-t^*)~,
    \]
which holds if and only if $\r(s-t^*)=0$ and thus $\r s = \r t^* =\ln2$, which contradicts the assumption that $\r s > \ln2$. This proves that $\r t^* >\ln2$ whenever $\r s > \ln2$, concluding the proof of the lemma.
\end{proof}

\begin{lemma}\label{lem:region-difference}
    Let $s\in [\ln2/\r,1]$ , and let $t^*$ be as in Lemma~\ref{lem:opt-hinterland}. Then,
        $$ \EH{t^*}=\EM{s-t^*}+\frac{1}{2\r}e^{-\r(s-t^*)} $$
\end{lemma}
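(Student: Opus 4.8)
The plan is to verify the identity by direct substitution of the closed forms for $\EH{\cdot}$ and $\EM{\cdot}$ from Claim~\ref{clm:region-profit}, and then to invoke the first-order optimality condition for $t^*$ supplied by Lemma~\ref{lem:opt-hinterland}. Since $s\in[\ln2/\r,1]$ we have $\r s\geq\ln2$, so Lemma~\ref{lem:opt-hinterland} applies. To streamline the bookkeeping I would set $a=\r t^*$ and $b=\r(s-t^*)$, so that Equation~\eqref{eq:opt-hinterland} reads $e^{b-a}=(1+b)/2$, or equivalently
$$ e^{-a}=\tfrac12 e^{-b}(1+b)~. $$

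First I would write out both sides in these terms. By Claim~\ref{clm:region-profit}, $\EH{t^*}=\tfrac1\r\bigl(1-e^{-a}\bigr)$, while $\EM{s-t^*}=\tfrac1\r\bigl(1-e^{-b}(1+b/2)\bigr)$, and the correction term is $\tfrac{1}{2\r}e^{-b}$. Multiplying the target identity through by $\r$, the claim becomes
$$ 1-e^{-a}=1-e^{-b}\left(1+\tfrac{b}{2}\right)+\tfrac12 e^{-b}~. $$
Collecting the $e^{-b}$ terms on the right-hand side yields $1-\tfrac12 e^{-b}(1+b)$, so the identity reduces \emph{exactly} to $e^{-a}=\tfrac12 e^{-b}(1+b)$, which is the optimality relation displayed above. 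Hence the lemma follows by elementary algebra once the optimality condition is in hand; there is no deep obstacle, and the whole content is that the $\tfrac{1}{2\r}e^{-\r(s-t^*)}$ correction precisely absorbs the gap between a full-market ($\H$) term and a half-market ($\M$) term, as forced by where $t^*$ places the peripheral server.

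The only point requiring a little care is the boundary $s=\ln2/\r$, where $\r s=\ln2$ and Lemma~\ref{lem:opt-hinterland} gives $t^*=s$ (so $b=0$) rather than an interior root of Equation~\eqref{eq:opt-hinterland}. There I would simply check the identity directly: $\EH{s}=\tfrac1\r\bigl(1-e^{-\ln2}\bigr)=\tfrac{1}{2\r}$, while $\EM{0}=0$ and the correction term equals $\tfrac{1}{2\r}$, so both sides equal $\tfrac{1}{2\r}$. Equivalently, one observes that at $t^*=s$ Equation~\eqref{eq:opt-hinterland} degenerates to $e^{-\r s}=1/2$, i.e.\ $\r s=\ln2$, so the relation $e^{-a}=\tfrac12 e^{-b}(1+b)$ still holds at the endpoint and the uniform argument above covers the entire range $s\in[\ln2/\r,1]$.
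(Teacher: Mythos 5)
Your proof is correct and follows essentially the same route as the paper's: substitute the closed forms from Claim~\ref{clm:region-profit} and observe that the identity collapses to the first-order condition of Equation~\eqref{eq:opt-hinterland}. Your explicit check of the endpoint $s=\ln2/\r$ (where $t^*=s$ and the optimality relation degenerates to $e^{-\r s}=1/2$) is a small point of added care that the paper's proof leaves implicit.
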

\begin{proof}
    By Claim~\ref{clm:region-profit},
    $$ \EH{t^*}= \frac1\r\left[1-e^{-\r t^*}\right]~. $$
    Plugging in Equation~\eqref{eq:opt-hinterland}, we get
    \begin{align*}
        \EH{t^*}   &= \frac1\r\left[1-e^{-\r (s-t^*)}\left(\frac{1+\r(s-t^*)}2\right)\right] \\
                        &= \frac1\r\left[1-e^{-\r (s-t^*)}\left(1+\frac{\r(s-t^*)}2\right)\right]
                                                +\frac{1}{2\r}e^{-\r(s-t^*)}
    \end{align*}
    Plugging in Claim~\ref{clm:region-profit} we obtain the result.
\end{proof}

\begin{claim}\label{clm:center-opt}
    Let $f:[0,1]\to \mathbb{R}$ be a strictly concave function, and define $g(t)= f(t)+f(s-t)$ for some $s\in[0,1]$. Then,
        $$g\left(\frac{s}{2}\right)=\max_{t\in[0,s]}{g(t)}~.$$
\end{claim}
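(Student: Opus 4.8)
The plan is to read $s/2$ as the midpoint of the symmetric pair $\{t,\,s-t\}$ and apply the defining inequality of strict concavity directly to that pair. This sidesteps any need to differentiate $f$ (which is only assumed concave, not smooth) or to argue separately that a maximum is attained.

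Concretely, I would fix an arbitrary $t\in[0,s]$ with $t\neq s/2$ and observe that both $t$ and $s-t$ lie in $[0,s]\subseteq[0,1]$ (so $f$ is defined at each), are distinct precisely because $t\neq s/2$, and have midpoint $\tfrac12 t+\tfrac12(s-t)=s/2$. Strict concavity of $f$ then yields
$$ f\!\left(\frac{s}{2}\right) ~>~ \frac12\,f(t)+\frac12\,f(s-t)~. $$
Multiplying by $2$ and recognizing the right-hand side as $g(t)=f(t)+f(s-t)$, while the left-hand side doubles to $g(s/2)=2f(s/2)$, I obtain $g(s/2)>g(t)$ for every $t\neq s/2$. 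Since equality holds trivially at $t=s/2$, this shows $g(s/2)$ strictly dominates $g(t)$ at all other points of $[0,s]$, and hence $g(s/2)=\max_{t\in[0,s]}g(t)$, as claimed.

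There is essentially no obstacle here; the only point meriting care is to anchor the argument on the symmetry of the pair $\{t,\,s-t\}$ (whose midpoint is always $s/2$) rather than on a first-order condition $g'(s/2)=0$, so that strict concavity of $f$ alone suffices and no differentiability hypothesis is invoked. As a byproduct the argument shows the maximizer is \emph{unique}, which is exactly the property used when Lemma~\ref{lem:opt-internal} places each internal server at the center of the interval between its neighbors.
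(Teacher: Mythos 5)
Your proof is correct, and it takes a genuinely different route from the paper's. The paper differentiates: it computes $g'(t)=f'(t)-f'(s-t)$, notes $g'(s/2)=0$, and concludes from $g''(t)=f''(t)+f''(s-t)<0$ that $s/2$ is the maximizer. That argument quietly assumes more than the claim states, namely that $f$ is twice differentiable with $f''<0$ (a strictly concave function need not be differentiable, and even a smooth strictly concave one can have $f''=0$ at isolated points); this is harmless in context, since the claim is only ever applied to $\EH{\cdot}$ and $\EM{\cdot}$, which are smooth with strictly negative second derivative on the relevant range. Your argument instead applies the defining inequality of strict concavity to the symmetric pair $\{t,\,s-t\}$, whose midpoint is $s/2$, obtaining $2f(s/2)>f(t)+f(s-t)=g(t)$ for every $t\neq s/2$ directly. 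This proves the claim under exactly its stated hypothesis, needs no differentiability and no separate existence-of-maximum argument, and delivers strict dominance --- hence uniqueness of the maximizer --- as an immediate byproduct, which is precisely the ``single optimal point'' property invoked in Lemma~\ref{lem:opt-internal}. In short, your proof is the more elementary and more general one; the paper's is a routine calculus verification tailored to the smooth functions it is applied to.
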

\begin{proof}
    Note that
        $$ g'(t) = f'(t)-f'(s-t), $$
    and thus $g'(s/2) = 0$.
    Additionally, we have that $f''(t)<0$ for $t\in [0,1]$. Therefore,
        $$ g''(t) = f''(t)+f''(s-t) < 0 $$
    for $t\in [0,s]$. Hence, $s/2$ maximizes $g$ in the interval $[0,s]$.
\end{proof}

\begin{lemma}\label{lem:opt-internal}
    For every fixed $s\in[0,1]$ there exists a single optimal point $t^*\in[0,s]$ such that $\mu(t^*,s)=\max_t\mu(t,s)$. Moreover, $t^*=s/2$ for all $s\in[0,1]$.
\end{lemma}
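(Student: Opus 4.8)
The plan is to recognize that $\mu(t,s) = \EM{t} + \EM{s-t}$ already has exactly the additive form $g(t) = f(t) + f(s-t)$ treated in Claim~\ref{clm:center-opt}, with the single-variable function $f(d) := \EM{d}$. Thus, once I establish that $f$ is strictly concave, the claim immediately yields both the uniqueness of the maximizer and its location at $t^* = s/2$, with no separate optimization argument needed. This is the same structural shortcut that makes the internal case so much cleaner than the peripheral case of Lemma~\ref{lem:opt-hinterland}, where the two summands $\EH{t}$ and $\EM{s-t}$ are \emph{different} functions and no midpoint symmetry is available.

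The key step is therefore the concavity of $f(d) = \EM{d}$. Using the closed form from Claim~\ref{clm:region-profit}, I would differentiate twice in $d$: a short computation gives $f'(d) = \tfrac12 e^{-\r d}(1 + \r d)$ and $f''(d) = -\tfrac{\r^2 d}{2} e^{-\r d}$. The latter is strictly negative for every $d > 0$, which establishes strict concavity of $f$ on the relevant range and feeds directly into Claim~\ref{clm:center-opt}.

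The main obstacle here is really a minor subtlety: $f''$ vanishes at the single point $d = 0$, so the pointwise hypothesis $f'' < 0$ used inside the proof of Claim~\ref{clm:center-opt} is not literally satisfied at that endpoint. I would resolve this by observing that what the argument actually requires is $g''(t) = f''(t) + f''(s-t) < 0$ throughout $[0,s]$, and this does hold for every $s > 0$: at an interior $t$ both summands are strictly negative, while at an endpoint the single vanishing summand is compensated by its strictly negative partner (whose argument equals $s > 0$). Hence $\mu(\cdot,s)$ is strictly concave on $[0,s]$ and Claim~\ref{clm:center-opt} pins the unique maximum at $t^* = s/2$. The degenerate case $s = 0$ is immediate, since then $[0,s] = \{0\}$ and trivially $t^* = 0 = s/2$.
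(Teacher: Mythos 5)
Your proof is correct and follows essentially the same route as the paper: establish strict concavity of $\EM{\cdot}$ via its second derivative $-\tfrac{\r^2 d}{2}e^{-\r d}$ and then invoke Claim~\ref{clm:center-opt}. Your extra care with the vanishing of $f''$ at $d=0$ is a legitimate patch of a detail the paper's proof glosses over (it asserts strict negativity ``for every $t\in[0,s]$''), but it does not change the substance of the argument.
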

\begin{proof}
    $\EM{t}$ is strictly concave as a function of $t$, since
    $$ \frac{d^2\,\EM{t}}{d t^2}=-\frac{\r^2 t}2 e^{-\r t}<0~, $$
    for every $t\in [0,s]$. Therefore, the lemma follows from Claim~\ref{clm:center-opt} and the definition of $\mu(t,s)$.
\end{proof}

\begin{observation}\label{obs:utility}
	For every internal server $i\in N$:
	$ \displaystyle
		u_i(\X)= \frac{\mu(x_i-x_i^\ell,x_i^r-x_i^\ell)}{\gamma_i}~.
	$

	For every left peripheral server $i \in N$:
	$ \displaystyle
		u_i(\X)= \frac{\theta(x_i,x_i^r)}{\gamma_i}~.
	$

	For every right peripheral server $i \in N$:
	$ \displaystyle
		u_i(\X)= \frac{\theta(1-x_i,1-x_i^\ell)}{\gamma_i}~.
	$
\end{observation}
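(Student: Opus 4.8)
The plan is to derive all three formulas from a single source: the definition $u_i(\X) = \Ebb[D_i]/\gamma_i$ together with the market decomposition $D_i = D_i^\ell + D_i^r = (x_i - L_i) + (R_i - x_i)$. Since expectation is linear, I would first write $\Ebb[D_i] = \Ebb[x_i - L_i] + \Ebb[R_i - x_i]$, so that the two half-markets can be evaluated independently. Each half-market expectation is exactly what Claim~\ref{clm:region-equiv} computes, and the type of the resulting term ($\mean{\H}$ versus $\mean{\M}$) is dictated solely by whether that side of server $i$ is a hinterland or an internal region. This reduces the whole observation to substituting the outputs of Claim~\ref{clm:region-equiv} into the definitions of $\theta$ and $\mu$ and performing a change of variables.

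For an internal server both half-markets abut a neighbor, so Claim~\ref{clm:region-equiv} gives $\Ebb[x_i - L_i] = \EM{x_i - x_i^\ell}$ and $\Ebb[R_i - x_i] = \EM{x_i^r - x_i}$. Setting $t = x_i - x_i^\ell$ and $s = x_i^r - x_i^\ell$, I note that $x_i^r - x_i = s - t$, so the sum equals $\EM{t} + \EM{s-t} = \mu(t,s)$, which is precisely $\mu(x_i - x_i^\ell, x_i^r - x_i^\ell)$; dividing by $\gamma_i$ yields the first formula. For a left peripheral server the left half-market is a hinterland while the right is internal, so Claim~\ref{clm:region-equiv} gives $\Ebb[x_i - L_i] = \EH{x_i}$ and $\Ebb[R_i - x_i] = \EM{x_i^r - x_i}$; setting $t = x_i$ and $s = x_i^r$ makes the sum $\EH{t} + \EM{s-t} = \theta(x_i, x_i^r)$.

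The right peripheral case is the only one needing a moment's thought, because the target expression $\theta(1-x_i, 1-x_i^\ell)$ is stated in reflected coordinates. Here Claim~\ref{clm:region-equiv} gives $\Ebb[R_i - x_i] = \EH{1-x_i}$ and $\Ebb[x_i - L_i] = \EM{x_i - x_i^\ell}$. I would set $t = 1-x_i$ and $s = 1-x_i^\ell$, observe that $s - t = x_i - x_i^\ell$, and conclude $\theta(1-x_i, 1-x_i^\ell) = \EH{1-x_i} + \EM{x_i - x_i^\ell} = \Ebb[D_i]$; intuitively, the reflection $x \mapsto 1-x$ carries the right hinterland and the left internal region of $i$ onto the hinterland and internal region appearing in the definition of $\theta$. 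Dividing by $\gamma_i$ completes this case.

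There is no real obstacle here: the argument is pure bookkeeping via linearity of expectation and Claim~\ref{clm:region-equiv}. The only points warranting care are the two changes of variables (especially the reflection used in the right peripheral case) and the observation that the $\mean{\H}$/$\mean{\M}$ dichotomy in Claim~\ref{clm:region-equiv} exactly mirrors the hinterland/internal-region structure used to define $\theta$ and $\mu$, so that no separate treatment of colocated servers is required beyond the division by $\gamma_i$.
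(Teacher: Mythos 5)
Your proof is correct and follows exactly the route the paper intends (the observation is stated without proof precisely because it is immediate from the definition $u_i(\X)=\Ebb[D_i]/\gamma_i$, the decomposition $D_i = (x_i-L_i)+(R_i-x_i)$, Claim~\ref{clm:region-equiv}, and the definitions of $\theta$ and $\mu$). The changes of variables you give, including the reflection $x\mapsto 1-x$ for the right peripheral case, are the right bookkeeping and nothing further is needed.
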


\subsection{Nash Equilibria of the Game}\label{sec:NE}
We next consider what kinds of profiles can be Nash equilibria in our game. If $n=1$, i.e., only one server is located on the segment, then by Claim~\ref{clm:center-opt}, its unique optimal location is at the center, $x_1=1/2$. If $n=2$, then both servers are peripheral and therefore Lemma~\ref{lem:opt-hinterland} is sufficient to find all the Nash equilibria of the game. Hence, we next prove Theorem~\ref{thm:NE-2players}.

\par\noindent {\em Proof of Theorem~\ref{thm:NE-2players}.~}
    Let $(x_1,x_2)$ be a Nash equilibrium of the game and assume w.l.o.g. that $x_1 \leq x_2$. First note that $x_1\leq 1/2$, otherwise server 2 can improve its payoff by relocating to $1-x_2$ (due to the monotonicity of the payoff from an internal region). Symmetrically, we also have $x_2\geq 1/2$.

    Consider the case where $\r \leq 2\ln2$, and suppose that $x_2 > 1/2$. Hence, $ \r x_2 > \ln2 $ and by Lemma~\ref{lem:opt-hinterland}, it follows that $ \r x_1 > \ln2 $ when $x_1$ is optimally located in the segment $[0,x_2]$. Hence, $x_1>1/2$, contradiction. So suppose $x_2 = 1/2$ and thus $ \r x_2 \leq \ln2 $. It follows, by Lemma~\ref{lem:opt-hinterland}, that the optimal location for server 1 is at $x_1=1/2$.

    Now consider the case where $\r > 2\ln2$. It follows that $\r x_2 > \ln2$ and $ \r (1-x_1) > \ln2 $, and thus by Lemma~\ref{lem:opt-hinterland}, we get
	\[
		e^{\r(x_2-2x_1)} = \frac{1+\r (x_2-x_1)}2
	       \qquad  \hbox{ and }  \qquad
		e^{\r((1-x_1)-2(1-x_2))} = \frac{1+\r (x_2-x_1)}2~,
	\]
    which yields $e^{\r(x_2-2x_1)} = e^{\r((1-x_1)-2(1-x_2))}$, and therefore $x_1=1-x_2$. It follows that $(x_1,x_2)$ is the canonical profile $\X^{2,\r}$. Clearly, no improving move exists for either server so this is a Nash equilibrium.
\qed


Finally, consider the case where there are at least three servers in the game, i.e., $n\geq3$. First note that in every Nash equilibrium, every server is isolated, as we show in the following claim.

\begin{claim}\label{clm:no-colocated}
    If $n\geq3$, then in every Nash equilibrium, $\gamma_i=1$ for all $i\in N$.
\end{claim}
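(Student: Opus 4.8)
The plan is to argue by contradiction. Suppose $\X$ is a Nash equilibrium in which some cluster of $\gamma\ge 2$ servers is colocated at a point $x$, and exhibit an improving move for one of them. Since the whole cluster shares a single market, every colocated server earns exactly $u_i(\X)=(a+b)/\gamma$, where $a$ and $b$ are the expected payoffs from the cluster's left and right half-markets; by Claim~\ref{clm:region-profit} and Observation~\ref{obs:utility}, each of $a,b$ is $\EH{\cdot}$ if that side is a hinterland and $\EM{\cdot}$ if it is an internal region, with the length fixed by the nearest neighbor or boundary. The core idea is that a single server can profitably break away: by relocating an infinitesimal distance $\epsilon$ to one side it becomes isolated and captures essentially the \emph{entire} half-market on that side, instead of a $1/\gamma$ fraction of the combined market.

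The key steps are as follows. First I would set up the deviation payoff. Moving server $i$ to $x+\epsilon$ leaves the remaining $\gamma-1$ servers at $x$ as its new left neighbor, so its left payoff becomes $\EM{\epsilon}\to 0$, while its right payoff tends to $b$ (whether the former right side is an internal region of length $r$, giving $\EM{r-\epsilon}\to\EM{r}$, or a hinterland, giving $\EH{1-x-\epsilon}\to\EH{1-x}$); symmetrically, a move to $x-\epsilon$ has payoff tending to $a$. Hence the deviation payoff tends to $\max(a,b)$, and one moves toward the larger side. Second, since $\gamma\ge 2$ we have $\max(a,b)\ge(a+b)/2\ge(a+b)/\gamma=u_i(\X)$, so breaking away is never worse; the real work is strictness. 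Third, whenever $\gamma\ge 3$, or $\gamma=2$ with $a\ne b$, the bound $\max(a,b)>(a+b)/\gamma$ is already strict (note $a,b$ are never both zero, since at most one side can be a length-zero hinterland), and because the deviation payoff converges to $\max(a,b)$, continuity furnishes a strictly improving move for all small enough $\epsilon>0$.

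The main obstacle is the borderline case $\gamma=2$ with $a=b$, where the \emph{limiting} deviation payoff only ties the current share $(a+b)/\gamma=a$. Here I would replace the limit argument with a sharper non-infinitesimal computation on an internal side. Because $n\ge 3$, a two-server cluster has a neighbor on at least one side, so at least one half-market is an internal region of positive length $r$; moving into it gives payoff $h(\epsilon)=\EM{\epsilon}+\EM{r-\epsilon}$. Since the derivative of $\EM{d}$ is strictly decreasing in $d$ (exactly as used in Lemma~\ref{lem:opt-internal}), $h$ is strictly increasing on $(0,r/2)$, so $h(\epsilon)>h(0)=\EM{r}=a=u_i(\X)$ for small $\epsilon$, the required strict improvement. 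This contradicts the Nash equilibrium assumption and forces $\gamma_i=1$ for all $i$. The remaining bookkeeping---a length-zero hinterland when $x\in\{0,1\}$, and the fully colocated case $\gamma=n$ where $1/\gamma\le 1/3<1/2$ makes strictness automatic---folds routinely into the cases above.
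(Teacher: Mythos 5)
Your proposal is correct and follows essentially the same route as the paper: break a colocated server away by $\epsilon$ toward the larger half-market to beat the $1/\gamma$ share (which settles $\gamma\ge 3$ and $\gamma=2$ with unequal halves), and in the tie case $\gamma=2$ with $a=b$ use the strict concavity of $\EM{\cdot}$ on an adjacent internal region (the paper invokes Lemma~\ref{lem:opt-internal}, i.e.\ $\mu(0,s)<\mu(s/2,s)$, where you argue monotonicity of $h$ on $(0,r/2)$ --- the same fact). Your case bookkeeping, including $a+b>0$ and the existence of an internal side when $n\ge 3$, matches the paper's reasoning.
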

\begin{proof}
    Assume towards contradiction that $\X$ is a Nash equilibrium and there exist servers $i,i+1,\ldots,i+k$ such that $x_i=x_{i+1}=\cdots=x_{i+k}$ for $k>1$. Hence,
    \[
        u_i(\X)=\frac{\Ebb[R_i-L_i]}{k+1}~.
    \]
    But $\Ebb[R_i-L_i]=\Ebb[R_i-x_i]+\Ebb[x_i-L_i]$ and therefore $$\max(\Ebb[R_i-x_i],\Ebb[x_i-L_i])>\frac{\Ebb[R_i-L_i]}2>\frac{\Ebb[R_i-L_i]}{k+1}$$
    Assume w.l.o.g. that $\Ebb[R_i-x_i] \geq \Ebb[x_i-L_i]$. Clearly, relocating to $x_i+\varepsilon$ for sufficiently small $\varepsilon>0$ would improve the payoff of server $i$, contradicting the assumption that $\X$ is a Nash equilibrium. This proves that $\gamma_i\leq2$ for all $i\in N$.

    Moreover, even for $k=1$ such an improving move exists unless $\Ebb[R_i-x_i]=\Ebb[x_i-L_i]=\Ebb[R_i-L_i]/2$. So it is left to consider the case where $\gamma_i=2$ and $\Ebb[R_i-x_i]=\Ebb[x_i-L_i]$. At least one of the regions incident to $x_i$, w.l.o.g. $[x_i,x_i^r]$, is an internal region. Therefore, $\Ebb[R_i-x_i]=\EM{0}+\EM{x_i^r-x_i}=\mu(0,x_i^r-x_i)$. But by Lemma~\ref{lem:opt-internal}, $ \mu(0,x_i^r-x_i)<\mu((x_i^r-x_i)/2,x_i^r-x_i) $, i.e., relocating to the center of the interval $[x_i,x_i^r]$ improves the payoff of server $i$, contradiction. This proves that $\gamma_i=1$ for all $i\in N$.
\end{proof}

We are now ready to prove Theorem~\ref{thm:NE-form}.

\par\noindent {\em Proof of Theorem~\ref{thm:NE-form}.~}
	Let the profile $\X$ be a Nash equilibrium. By Claim~\ref{clm:no-colocated}, no pair of servers shares the same location. So we may assume w.l.o.g. that $0 \leq x_1 < x_2 < \ldots < x_n \leq 1$.
	Let $M=x_2-x_1$, $H_1=x_1$ and $H_2=1-x_n$. By Lemma~\ref{lem:opt-internal} we have that $M=x_2-x_1=x_3-x_2=x_4-x_3=\cdots=x_n-x_{n-1}$. Substituting $H_1$, $H_2$ and $M$ into Equation~\eqref{eq:opt-hinterland} we get
	\[
		e^{\r(M-H_1)} = \frac{1+\r M}2
	       \qquad  \hbox{ and }  \qquad
		e^{\r(M-H_2)} = \frac{1+\r M}2
	\]
	which yields $ e^{\r(M-H_1)} = e^{\r(M-H_2)}$ and therefore $H_1=H_2=H$. This proves that $x_i=H+(i-1)M$ for $i\in N$, and since $1-H=x_n=H+(i-1)M$, Equation~\eqref{eq:total-regions} follows as well.
	
	Finally, Equation~\eqref{eq:optimal-H} follows immediately by plugging $H$ and $M$ into Equation~\eqref{eq:opt-hinterland}, and Equation~\eqref{eq:minimal-lambda} follows from the last part of Lemma~\ref{lem:opt-hinterland}. This concludes the proof of the theorem.
\qed

\subsection{Properties of the Canonical Profile}\label{sec:canonical-profile}
Consider the canonical profile $\Xnr$. Then by Observation~\ref{obs:utility}, the payoff of the peripheral players is
\[
    u_1(\Xnr)=u_n(\X^{n,\r})=\theta(H,H+M)=\EH{H}+\EM{M}~,
\]
and the payoff of each internal player $i\in\{2,\ldots,n-1\}$ is
\[
    u_i(\Xnr)=\mu(M,2M)=\EM{M}+\EM{M}~.
\]
Additionally, by Lemma~\ref{lem:region-difference}, the payoff of a peripheral player always exceeds the payoff of an internal player by \emph{exactly} $e^{-\r M}/2\r$, i.e.,
\[
    u_1(\Xnr)-u_2(\Xnr)=\frac{e^{-\r M}}{2\r}~.
\]

Before we continue, we define a reparamatrization that would be useful to us in the following proofs. Define
\begin{equation}
\label{eq:def-alpha}
\a=\r M   ~~~~\mbox{and}~~~~  c=1-H/M~.
\end{equation}

\begin{claim}\label{clm:reparam}
    The values of $c$, $M$, $H$ and $\r$ can be expressed in terms of $\a$ and $n$ as follows.
    \begin{description}
      \item{(1)} $c=\frac{\ln\left(\frac{1+\a}{2}\right)}{\a}$; 
      \item{(2)} $M=\frac{1}{n+1-2c}$;                          
      \item{(3)} $H=\frac{1-c}{n+1-2c}$;                        
      \item{(4)} $\r=\a(n+1)-2\ln\left(\frac{1+\a}{2}\right)$.  
    \end{description}
\end{claim}
\begin{proof}
Substituting $c$ and $\a$ into Equation~\eqref{eq:optimal-H} we obtain $ e^{c\a} = (1+\a)/2$,
which yields the first part of the claim.
Plugging $H=(1-c)M$ into Equation~\eqref{eq:total-regions} yields the next two parts.
To obtain the last part we plug the terms we obtained for $M$ and $c$ into $\r=\a/M$.
\end{proof}

\begin{observation}\label{obs:monotone-lambda}
    The parameter $\r$ is monotone increasing as a function of $\a$ for all $\a>0$ and $n>2$. Therefore, $\a$ is a monotone increasing function of $\r$ as well.
\end{observation}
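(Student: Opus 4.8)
The plan is to work directly from the closed-form expression for $\r$ as a function of $\a$ supplied by part~(4) of Claim~\ref{clm:reparam}, namely $\r(\a) = \a(n+1) - 2\ln\left(\frac{1+\a}{2}\right)$, and simply verify that its derivative with respect to $\a$ is strictly positive on $(0,\infty)$ whenever $n > 2$. Since $\r(\a)$ is manifestly differentiable in $\a$, exhibiting a positive derivative immediately yields strict monotonicity, so there is nothing subtle to set up.

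First I would differentiate. Writing $\ln\left(\frac{1+\a}{2}\right) = \ln(1+\a) - \ln 2$, the logarithmic term has derivative $\frac{1}{1+\a}$, and hence
$$\frac{d\r}{d\a} = (n+1) - \frac{2}{1+\a}~.$$
Next I would bound this from below: for every $\a > 0$ we have $1+\a > 1$, so $\frac{2}{1+\a} < 2$, which gives
$$\frac{d\r}{d\a} > (n+1) - 2 = n-1~.$$
For $n > 2$ this lower bound is at least $1 > 0$, so $\frac{d\r}{d\a} > 0$ throughout $(0,\infty)$. This establishes the first assertion, that $\r$ is strictly increasing as a function of $\a$.

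To deduce the second assertion I would invoke the standard fact that a continuous, strictly increasing function on an interval admits a continuous, strictly increasing inverse (equivalently, the inverse function theorem, since the derivative never vanishes). As $\r(\a)$ is strictly increasing and continuous on $(0,\infty)$, it is a bijection onto its image, and its inverse $\a(\r)$ is itself strictly increasing, exactly as claimed.

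I do not anticipate any genuine obstacle: the whole argument collapses to the elementary inequality $\frac{2}{1+\a} < 2$, valid for all positive $\a$. The only point worth a moment's attention is that the bound is stated for $n > 2$; in fact the computation shows $\frac{d\r}{d\a} > n-1$, so monotonicity persists for every $n \geq 2$, and the hypothesis $n > 2$ is comfortably met.
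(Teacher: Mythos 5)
Your proof is correct and follows exactly the paper's route: the paper's entire justification is the one-line remark that the $\a$-derivative of $\r$ is strictly positive, and you have simply carried out that computation explicitly, obtaining $d\r/d\a = (n+1) - \tfrac{2}{1+\a} > n-1 > 0$. Nothing further is needed.
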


The observation follows from the fact the $\a$ derivative of $\r$ is strictly positive.

Keeping $n$ fixed, by Claim~\ref{clm:reparam}, for each $\a>0$ we obtain $\r$, $H$ and $M$, such that $H$ and $M$ are the canonical pair of $\fthotel(n,\r)$. Moreover, by Observation~\ref{obs:monotone-lambda}, considering $\r$ as a function of $\a$ over the domain $\a\in(0,\infty)$, $\r$ obtains all the values $\r\in(2\ln2,\infty)$. In conclusion, we have the following.

\begin{corollary}\label{cor:canonical-existence}
    The canonical profile $\Xnr$ exists for all $n\geq2$ and $\r>2\ln2$.
\end{corollary}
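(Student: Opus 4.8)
The plan is to invert the reparametrization of Claim~\ref{clm:reparam}: rather than deriving $\a=\r M$ from an already-existing canonical pair, I would treat $\a>0$ as the free parameter, use the four formulas of Claim~\ref{clm:reparam} to \emph{define} candidate values $c,M,H,\r$, and then verify that the resulting $(H,M)$ is a genuine canonical pair for $\fthotel(n,\r)$. Having done this for every $\a>0$, it remains to show that the induced map $\a\mapsto\r(\a)$ is a continuous strictly increasing bijection from $(0,\infty)$ onto $(2\ln2,\infty)$, so that every target rate $\r>2\ln2$ is realized by some $\a$, thereby producing a canonical pair and hence the canonical profile $\Xnr$.

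For the verification step I would fix $n\ge2$ and $\a>0$ and set $c=\ln(\tfrac{1+\a}2)/\a$, $M=1/(n+1-2c)$, $H=(1-c)M$ and $\r=\a/M$. First comes positivity: since $e^{\a}>1+\a>\tfrac{1+\a}2$ for $\a>0$, we get $\ln(\tfrac{1+\a}2)<\a$, i.e.\ $c<1$; hence $1-c>0$ and $n+1-2c>n-1\ge1>0$, so $M>0$ and $H>0$, and the identity $2H+(n-1)M=1$ follows by direct substitution, establishing \eqref{eq:total-regions} (and forcing $H,M\in[0,1]$). Next, $c=\ln(\tfrac{1+\a}2)/\a$ rearranges to $e^{c\a}=\tfrac{1+\a}2$; using $\a=\r M$ we have $c\a=(1-H/M)\a=\a-\r H=\r(M-H)$, which turns this into $e^{\r(M-H)}=\tfrac{1+\r M}2$, exactly \eqref{eq:optimal-H}. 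Finally, for \eqref{eq:minimal-lambda} I would compute $\r H=(1-c)\a=\a-\ln(1+\a)+\ln2$, which exceeds $\ln2$ precisely because $\a>\ln(1+\a)$ for all $\a>0$.

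It then remains to pin down the range of $\r(\a)=(n+1)\a-2\ln(\tfrac{1+\a}2)$. This function is continuous, and its derivative $(n+1)-2/(1+\a)$ is strictly positive for every $\a>0$ whenever $n\ge2$ (slightly extending Observation~\ref{obs:monotone-lambda} to the $n=2$ case via this direct derivative check), so $\r(\a)$ is strictly increasing. Computing the endpoint limits gives $\lim_{\a\to0^+}\r(\a)=-2\ln\tfrac12=2\ln2$ and $\lim_{\a\to\infty}\r(\a)=\infty$, so by the intermediate value theorem the image of $(0,\infty)$ is exactly $(2\ln2,\infty)$. Consequently, for each $\r>2\ln2$ there is a unique $\a>0$ with $\r(\a)=\r$, and the associated $(H,M)$ constructed above is the canonical pair of $\fthotel(n,\r)$, so $\Xnr$ exists, as claimed.

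I expect the only genuinely delicate points to be the two elementary inequalities driving the argument, namely $c<1$ (for positivity of $H,M$) and $\r H>\ln2$ (for \eqref{eq:minimal-lambda}), both of which reduce to the bound $e^{\a}>1+\a$, together with the endpoint-limit computation that upgrades monotonicity into surjectivity onto $(2\ln2,\infty)$; everything else is routine substitution into the previously established identities.
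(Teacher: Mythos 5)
Your proof is correct and follows essentially the same route as the paper: invert the reparametrization of Claim~\ref{clm:reparam} to produce a canonical pair for each $\a>0$, then use the monotonicity of $\r(\a)$ together with its endpoint limits to conclude that every $\r>2\ln2$ is attained. You merely fill in the verification details (including the $n=2$ case of Observation~\ref{obs:monotone-lambda}) that the paper leaves implicit.
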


Additionally, while we do not have an explicit representation of $M$ and $H$ as functions of $\r$, we may examine the parametric curves $(\r(\a),M(\a))$ and $(\r(\a),H(\a))$ for $\alpha>0$ (see Figure~\ref{fig:param-M-H}).
\begin{figure}[ht]
    \begin{center}
        \includegraphics[height=6cm]{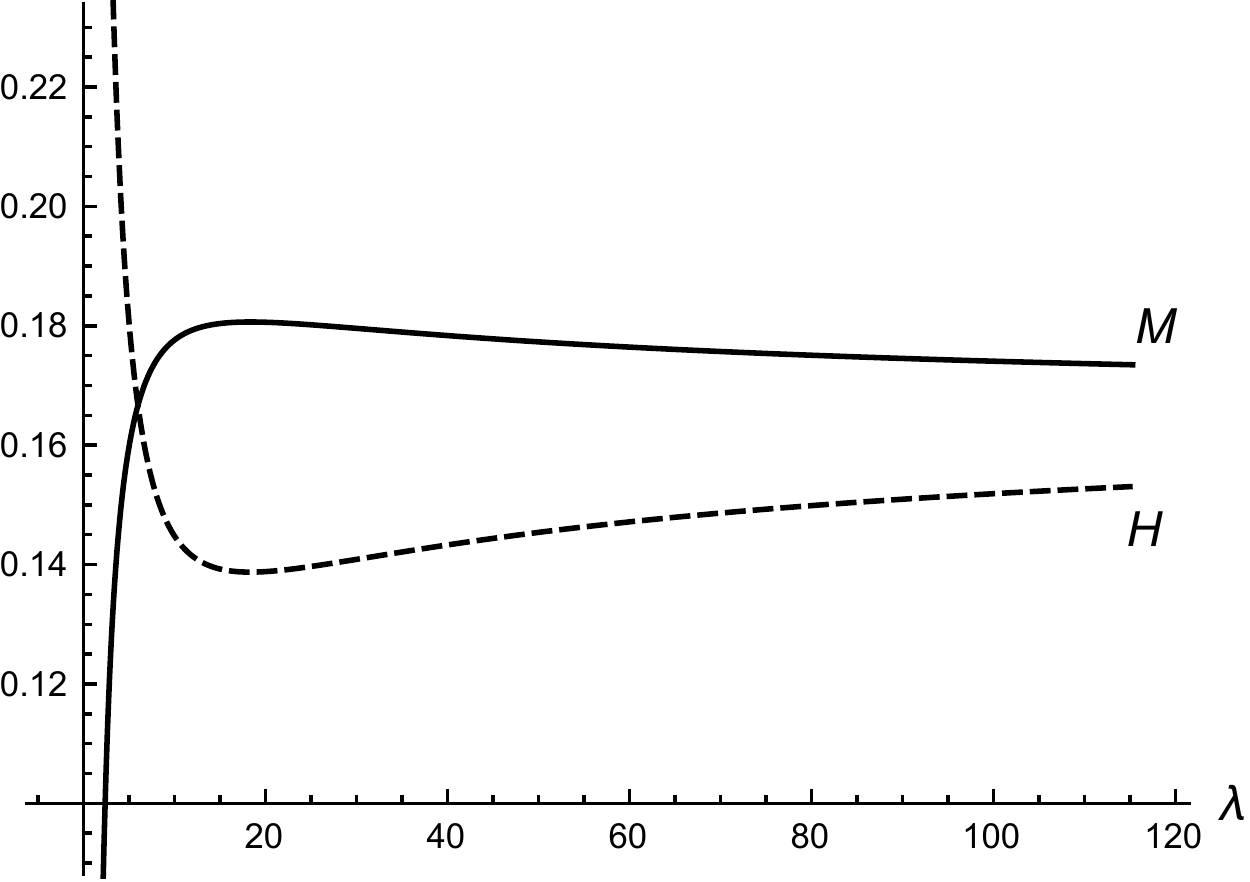}
    \end{center}
    \caption{parametric curves of $M$ and $H$ as a function of $\r$ ($n=5$).}
    \label{fig:param-M-H}
\end{figure}

The curves in Figure~\ref{fig:param-M-H} depend on the value of $n$. To examine the general behavior of the canonical profile, consider $c$ as a function of $\a$ as depicted in Figure~\ref{fig:canonical-ratio}.

\begin{figure}[ht]
    \begin{center}
        \includegraphics[height=6cm]{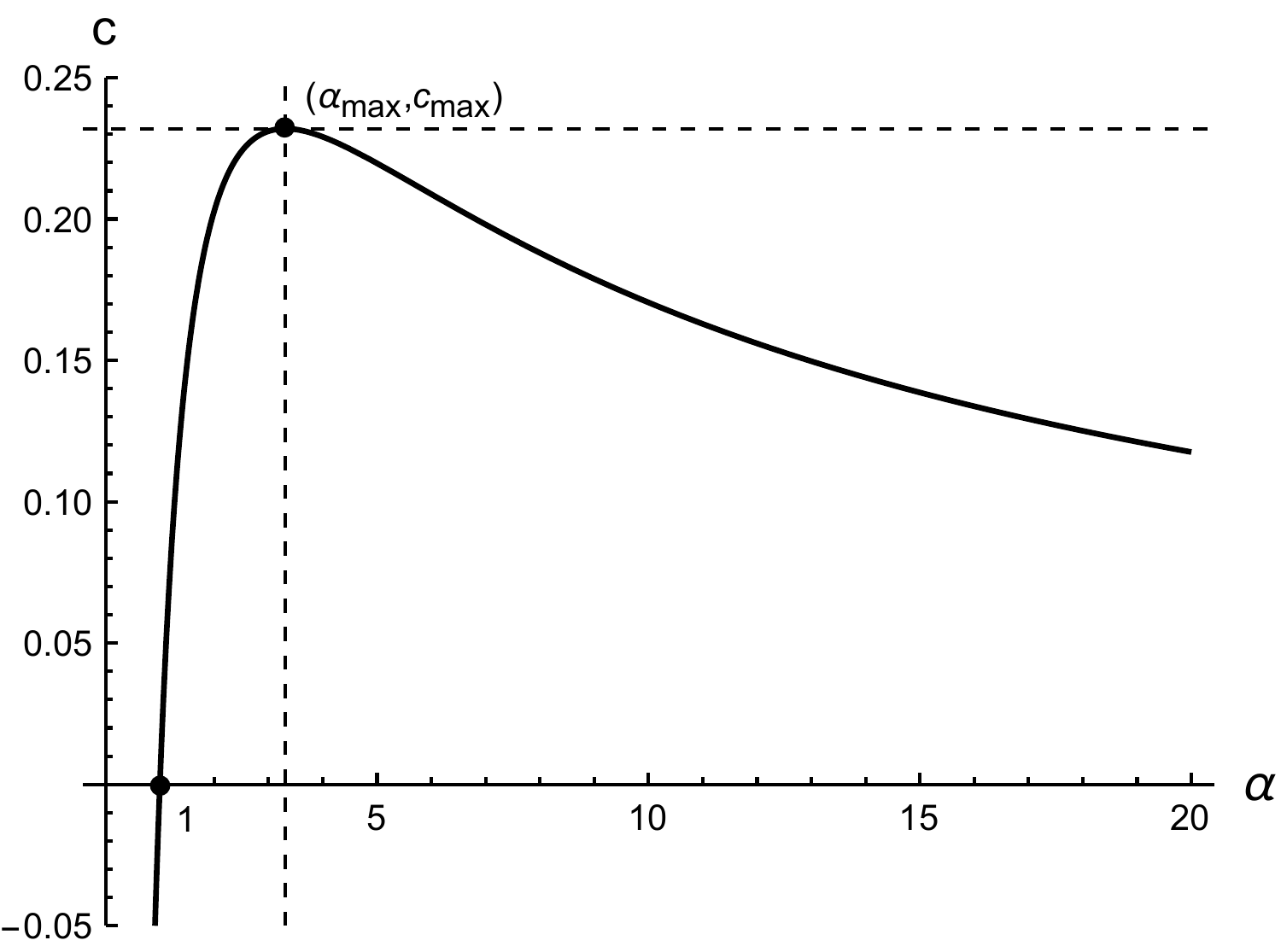}
    \end{center}
    \caption{$c$ as a function of $\a$. Note that $c(1)=0$ and that as $\a$ goes to infinity $c$ goes to zero.}
    \label{fig:canonical-ratio}
\end{figure}

The function $c(\a)$ as in Part (1) of Claim~\ref{clm:reparam} has a unique maximum at the point $(\a_{\max},c_{\max})\approx(3.111,0.232)$, is strictly increasing for $0<\a<\a_{\max}$, and strictly decreasing for $\a>\a_{\max}$. Additionally, we have that $c>0$ for $\a \geq 1$, and as $\a$ goes to infinity, $c$ goes to $0$. Define
	$$ \r_{\max}(n) = \r(\a_{\max}) = \a_{\max}(n+1)-2\ln\left(\frac{1+\a_{\max}}{2}\right)$$
By Observation~\ref{obs:monotone-lambda} and Part (2) of Claim~\ref{clm:reparam}, we have that $M$ increases with $\r$ for $0<\r<\r_{\max}(n)$, attains its maximum value at $\r=\r_{\max}(n)$, and then decreases with $\r$ for $\r>\r_{\max}(n)$. Moreover, $M\geq H$ for $\r\geq n+1$ ($\a\geq 1$) and as $\r$ goes to infinity, $M/H$ tends to 1.

\subsection{Existence of a Nash Equilibrium (for \texorpdfstring{$n\geq3$}{n>=3})}\label{sec:NE-existence}

So far, we know that for any given $n\geq3$ and $\r>0$, if there exists a Nash equilibrium for the game $\fthotel(n,\r)$, then it is a canonical profile (as depicted in Figure~\ref{fig:NE-shape}). The reason for that is that otherwise, there would be at least one server that has an improving move \emph{within} the interval between its neighbors (as depicted in Figure~\ref{fig:local-improve}). We have not yet considered whether a server has an improving move to anywhere \emph{outside} the interval between its neighbors.

Note first that by Theorem~\ref{thm:NE-form}, if for given $n \geq 3$ and $\r$, there exist no canonical pair $H$ and $M$ (that satisfy conditions \eqref{eq:optimal-H}, \eqref{eq:total-regions} and \eqref{eq:minimal-lambda}), then no Nash equilibrium exists. This is due to the fact that in this case, the peripheral servers can always improve by moving closer to their neighbors, which in turn would improve by equalizing the internal regions, and then again the peripheral servers would move closer, this would continue until the internal regions become too small and then an internal server would improve by moving to the hinterland.

Hence, consider the canonical profile $\Xnr$. By Lemmas~\ref{lem:opt-hinterland} and \ref{lem:opt-internal} we have that in each region there is one point that maximizes the payoff. Moreover, by Claim~\ref{clm:no-colocated} the payoff gained from colocating with another server $j$ can always be exceeded by that of an isolated location within one of $j$'s adjacent regions. Therefore, the canonical profile $\Xnr$ is a Nash equilibrium if and only if relocating to the optimal location of any region is not an improving move for any player $i$.

\begin{figure}[t]
    \begin{subfigure}[t]{\textwidth}
        \includegraphics[width=\textwidth]{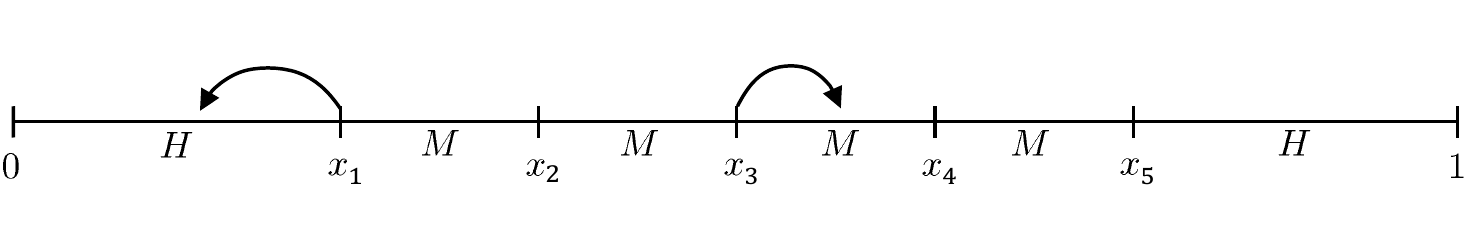}
        \caption{players relocating within the interval between their neighbors.}
        \label{fig:local-improve}
    \end{subfigure}
    \begin{subfigure}[t]{\textwidth}
        \includegraphics[width=\textwidth]{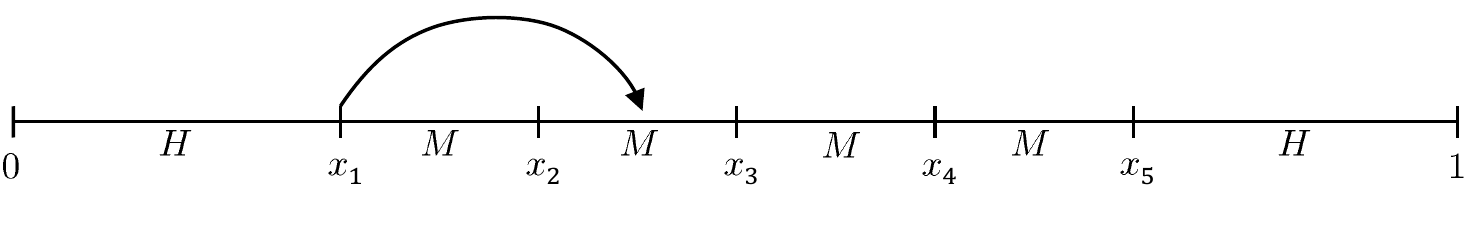}
        \caption{a peripheral player moving to an internal segment.}
        \label{fig:peripheral-improve}
    \end{subfigure}
    \begin{subfigure}[t]{\textwidth}
        \includegraphics[width=\textwidth]{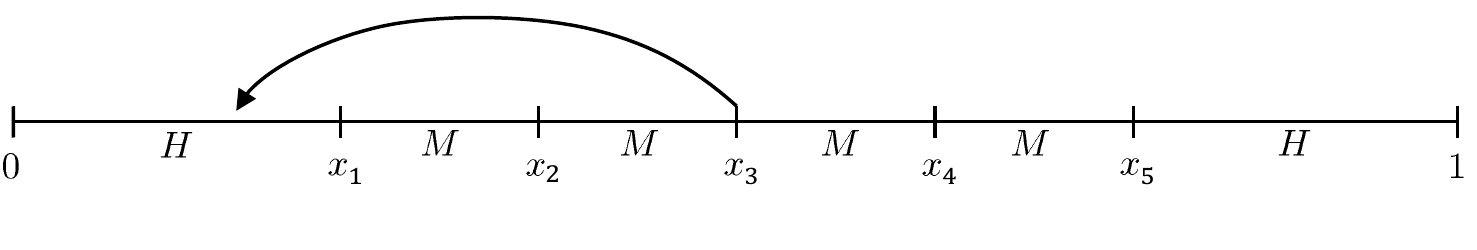}
        \caption{an internal player moving to the hinterland.}
        \label{fig:internal-improve}
    \end{subfigure}
\label{fig:improving-moves}
\caption{possible improving moves.}
\end{figure}

It is easy to see that no internal player has an improving move in an internal region and that no peripheral player has an improving move in the other hinterland (due to the monotonicity of $\EH{d}$ and $\EM{d}$). Furthermore, by Lemma~\ref{lem:region-difference}, the payoff of a peripheral player is larger than that of an internal player, so no peripheral player would gain by moving to an internal region (Figure~\ref{fig:peripheral-improve}). It is left to consider when an internal player would rather move to the hinterland. That is, we must rule out the possibility that the move depicted in Figure~\ref{fig:internal-improve} is an improving move. Formally, this translates to the following.
\begin{observation}\label{obs:NE-conditions}
    Let $\Xnr$ be the canonical profile, with a corresponding canonical pair $H$ and $M$. Then $\Xnr$ is a Nash equilibrium if and only if
    \begin{equation}\label{eq:internal-no-improve}
        \EM{M}+\EM{M} \geq \EH{t^*}+\EM{H-t^*}
    \end{equation}
    where $t^*$ maximizes $\theta(t,H)$ as in Lemma~\ref{lem:opt-hinterland}.
\end{observation}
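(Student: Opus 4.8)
The plan is to reduce the equilibrium test to the single inequality~\eqref{eq:internal-no-improve} by enumerating all unilateral deviations from the canonical profile $\Xnr$ and isolating the binding one. I would first record the relevant payoffs: in $\Xnr$ each internal player earns $\mu(M,2M)=\EM{M}+\EM{M}$ and each peripheral player earns $\theta(H,H+M)=\EH{H}+\EM{M}$, and by Lemma~\ref{lem:region-difference} (specialized to $s=H+M$, $t^*=H$) the latter exceeds the former by exactly $e^{-\r M}/(2\r)>0$. By the colocation argument of Claim~\ref{clm:no-colocated} it suffices to compare each player's payoff against the best payoff obtainable by relocating to an \emph{isolated} interior point of some region, and by Lemmas~\ref{lem:opt-hinterland} and \ref{lem:opt-internal} every such region has a unique interior optimum, so there are only finitely many target payoffs to examine.

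The forward direction is immediate: if $\Xnr$ is a Nash equilibrium then, in particular, an internal player cannot profit by moving into a hinterland. Since the left hinterland $[0,x_1]$ has length $H$ with the occupied point $x_1$ as its inner boundary, Lemma~\ref{lem:opt-hinterland} with $s=H$ shows such a move yields at most $\max_{t\in[0,H]}\theta(t,H)=\theta(t^*,H)=\EH{t^*}+\EM{H-t^*}$; comparing against the current payoff $\EM{M}+\EM{M}$ gives exactly~\eqref{eq:internal-no-improve}.

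For the converse I would assume~\eqref{eq:internal-no-improve} and verify every player is best-responding, using just three target payoffs: the optimum $2\EM{M/2}$ of any internal region of length $M$, the optimum $\theta(t^*,H)$ of either hinterland (whose inner boundary sits at distance $H$ from the border), and the player's own-region optimum. For an internal player the own-region optimum is $2\EM{M}$, attained at its current location (the center of its length-$2M$ window); since $\EM{\cdot}$ is strictly increasing we have $2\EM{M/2}<2\EM{M}$, while~\eqref{eq:internal-no-improve} gives $\theta(t^*,H)\le 2\EM{M}$, so no internal deviation improves. For a peripheral player the own-region optimum is $\theta(H,H+M)$ — here one checks that $t^*=H$ solves Lemma~\ref{lem:opt-hinterland} for $s=H+M$, which is precisely the canonical equation~\eqref{eq:optimal-H} baked into the pair $(H,M)$ — and since this value equals $2\EM{M}+e^{-\r M}/(2\r)$ it strictly dominates both $2\EM{M/2}$ and $\theta(t^*,H)$; hence no peripheral deviation improves either. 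The monotonicity and concavity facts already established, together with Lemma~\ref{lem:region-difference}, supply every comparison.

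The main obstacle is conceptual rather than computational: one must confirm that the internal-player-to-hinterland move is genuinely the tightest of all deviations, so that one inequality captures global stability. The delicate point is that a peripheral player relocating to the \emph{opposite} hinterland faces a region of length $H$ rather than $H+M$, because the far peripheral player still occupies its slot; its best alternative there is therefore also bounded by $\theta(t^*,H)$, and it is exactly the slack $e^{-\r M}/(2\r)$ from Lemma~\ref{lem:region-difference} that guarantees this move — like every other non-status-quo deviation — is dominated once~\eqref{eq:internal-no-improve} holds.
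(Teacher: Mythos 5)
Your proof is correct and follows essentially the same route as the paper: reduce to the finitely many per-region optima via Claim~\ref{clm:no-colocated} and Lemmas~\ref{lem:opt-hinterland} and \ref{lem:opt-internal}, then use Lemma~\ref{lem:region-difference} to show that the internal-player-to-hinterland move is the only binding deviation. The only (immaterial) difference is in the peripheral-player-to-opposite-hinterland case, where the paper invokes monotonicity directly ($\max_t\theta(t,H)\le\theta(H,H+M)$), whereas you bound that deviation by $\theta(t^*,H)\le 2\EM{M}<\theta(H,H+M)$ using the assumed inequality; both are valid.
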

Notice that the right hand side of \eqref{eq:internal-no-improve} is the payoff gained from locating at the optimal point of a hinterland of length $H$.

This reduces the problem of finding Nash equilibria of the game to finding a canonical pair $H$ and $M$ that also satisfies Equation~\eqref{eq:internal-no-improve}. Hence, from this point forward, we focus on solving Equation~\eqref{eq:internal-no-improve}.

\begin{lemma}\label{lem:M_geq_H}
    If $M\geq H$, then $\Xnr$ is a Nash equilibrium.
\end{lemma}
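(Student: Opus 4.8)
The plan is to verify the single outstanding condition supplied by Observation~\ref{obs:NE-conditions}, namely $\EM{M}+\EM{M}\ge \EH{t^*}+\EM{H-t^*}$, where $t^*$ maximizes $\theta(t,H)$. Since Equation~\eqref{eq:minimal-lambda} guarantees $\r H>\ln 2$, the hypotheses of Lemma~\ref{lem:region-difference} hold with $s=H$, so I would first substitute $\EH{t^*}=\EM{H-t^*}+\frac{1}{2\r}e^{-\r(H-t^*)}$. Writing $r=H-t^*$, the target condition collapses to
$$2\EM{M}\ \ge\ 2\EM{r}+\frac{1}{2\r}e^{-\r r}~.$$
The next step is a reparametrization: set $u=\r r$ and $v=\r M=\a$. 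Using $\r\EM{d}=1-e^{-\r d}(1+\r d/2)$ from Claim~\ref{clm:region-profit}, I multiply through by $\r$ and cancel the constant terms; the inequality becomes the clean two-variable statement
$$e^{-u}\left(u+\tfrac32\right)\ \ge\ e^{-v}\left(v+2\right)~.$$

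The crucial move is to bound $u$. From Lemma~\ref{lem:opt-hinterland} we have $\r t^*>\ln 2$, and from the canonical Equation~\eqref{eq:optimal-H} we get $\r H=\r M-\ln\!\big(\tfrac{1+\a}2\big)=v-\ln\!\big(\tfrac{1+v}2\big)$, so that
$$u=\r H-\r t^*\ <\ v-\ln\!\left(\tfrac{1+v}2\right)-\ln 2\ =\ v-\ln(1+v)~.$$
Now observe that $h(u):=e^{-u}(u+\tfrac32)$ has $h'(u)=-e^{-u}(u+\tfrac12)<0$, so it is \emph{decreasing}. Therefore the upper bound on $u$ works in our favor: $h(u)\ge h\big(v-\ln(1+v)\big)=e^{-v}(1+v)\big(v-\ln(1+v)+\tfrac32\big)$. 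Dividing out $e^{-v}$, it then suffices to prove the single-variable inequality
$$(1+v)\left(v-\ln(1+v)+\tfrac32\right)\ \ge\ v+2~.$$

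Finally, the hypothesis $M\ge H$ is precisely $\a=v\ge 1$ (by Claim~\ref{clm:reparam}(1) and $c=1-H/M$, since $c\ge 0\iff\a\ge 1$). Setting $P(v)=v^2+\tfrac32 v-\tfrac12-(1+v)\ln(1+v)$, the remaining inequality is exactly $P(v)\ge 0$. Since $\ln(1+v)<v$ for $v>0$, one has $P'(v)=2v+\tfrac12-\ln(1+v)>v+\tfrac12>0$, so $P$ is increasing on $[1,\infty)$, and $P(1)=2(1-\ln 2)>0$ closes the argument for all $v\ge 1$. The main obstacle is conceptual rather than computational: recognizing that the monotonicity of $h$ converts the upper bound $u<v-\ln(1+v)$ (itself a consequence of $\r t^*>\ln 2$) into a \emph{lower} bound on the left-hand side, thereby reducing a genuinely two-variable equilibrium condition to the elementary estimate $P(v)\ge 0$; the remaining steps are routine once this reduction is in place.
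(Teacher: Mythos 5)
Your proof is correct, but it follows a genuinely different route from the paper's. The paper's own argument is much more economical: it bounds the right-hand side of Equation~\eqref{eq:internal-no-improve} crudely via $\EM{t}\le\EH{t}$, then uses concavity (Claim~\ref{clm:center-opt}) to replace $\EH{t^*}+\EH{H-t^*}$ by $2\EH{H/2}$, invokes $H\le M$ and monotonicity to get $2\EH{M/2}$, and closes with the elementary inequality $e^{x}\ge 1+x$ applied to $x=\r M/2$. Your argument instead keeps the sharp identity of Lemma~\ref{lem:region-difference}, which turns the condition into $e^{-u}\left(u+\tfrac32\right)\ge e^{-v}(v+2)$ with $u=\r(H-t^*)$, $v=\a$ --- this is precisely (twice) the inequality~\eqref{eq:alpha-beta2} that the paper only introduces later in Lemma~\ref{lem:alpha-beta}. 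You then sidestep the paper's curve-intersection machinery by converting the bound $\r t^*>\ln 2$ (from Lemma~\ref{lem:opt-hinterland}) and the canonical relation $\r H=v-\ln\bigl(\tfrac{1+v}{2}\bigr)$ into the estimate $u<v-\ln(1+v)$, and exploit the monotone decrease of $h(u)=e^{-u}(u+\tfrac32)$ to reduce everything to $P(v)=v^2+\tfrac32 v-\tfrac12-(1+v)\ln(1+v)\ge 0$ for $v\ge 1$, which your derivative computation and $P(1)=2(1-\ln 2)>0$ settle. All the individual steps check out (including the equivalence $M\ge H\iff\a\ge1$, which is the paper's Observation~\ref{obs:alpha_geq_one}). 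What the paper's proof buys is brevity and independence from the later reparametrization; what yours buys is tightness --- it works directly with the exact equilibrium condition rather than the lossy bound $\EM{t}\le\EH{t}$, and it foreshadows the $\a$--$\b$ analysis used for the threshold theorem.
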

\begin{proof}
    By Observation~\ref{obs:NE-conditions}, it suffices to show that the canonical pair $H$ and $M$ satisfies Equation~\eqref{eq:internal-no-improve} if $M \geq H$. Consider the right hand side of Equation~\eqref{eq:internal-no-improve}. We have that
        $$ \EH{t^*}+\EM{H-t^*} \leq \EH{t^*}+\EH{H-t^*}
            \leq 2\EH{\frac{H}2}
            \leq 2\EH{\frac{M}2}~.$$
    The first inequality holds since $\EM{t}\leq\EH{t}$ for all $t\geq0$,
    the second follows from Claim~\ref{clm:center-opt}, and the last follows from the assumption and monotonicity.

    Therefore, Equation~\eqref{eq:internal-no-improve} is satisfied if
        $$ \EH{\frac{M}2} \leq \EM{M}~. $$
    Plugging in Claim~\ref{clm:region-profit} we obtain
        $$\frac{1}{\r} \left[1-e^{-\frac{\r M}2}\right] \leq \frac{1}{\r} \left[1-e^{-\r M}\left(1+\frac{\r M}{2}\right)\right]~.$$
    Rearranging we obtain
        $$ e^{\frac{\r M}2} \geq 1+\frac{\r M}{2}~, $$
    which holds for all $\r$ and $M$, concluding the proof.
\end{proof}

\begin{observation}\label{obs:alpha_geq_one}
	$\a\geq 1$ if and only if $M \geq H$.
\end{observation}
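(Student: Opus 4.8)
The plan is to reduce the claimed equivalence to a transparent sign condition on the reparametrized quantity $c$, and then read off that sign directly from the closed form for $c$ provided in Claim~\ref{clm:reparam}. Recall that $c = 1 - H/M$ and $\a = \r M$, and that throughout we have $M > 0$ and hence $\a > 0$. The first step is to observe that since $M > 0$, the inequality $M \geq H$ is equivalent to $H/M \leq 1$, which is in turn equivalent to $c = 1 - H/M \geq 0$. Thus the entire observation is equivalent to the statement $\a \geq 1 \iff c \geq 0$.

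Next I would invoke Part (1) of Claim~\ref{clm:reparam}, which gives the explicit formula
$$ c = \frac{\ln\left(\frac{1+\a}{2}\right)}{\a}~. $$
Because $\a > 0$, the denominator is strictly positive, so the sign of $c$ agrees with the sign of the numerator $\ln\left(\frac{1+\a}{2}\right)$. It then remains only to note that $\ln\left(\frac{1+\a}{2}\right) \geq 0$ holds if and only if $\frac{1+\a}{2} \geq 1$, i.e., if and only if $\a \geq 1$. Chaining these equivalences yields $M \geq H \iff c \geq 0 \iff \a \geq 1$, as desired.

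I do not expect any genuine obstacle here: the argument is a direct sign analysis using the closed form for $c$ already established earlier. The only points requiring a moment's care are the standing positivity assumptions ($M > 0$ and therefore $\a > 0$), which justify both the division by $M$ when passing to $c$ and the division by $\a$ when reading off the sign; these follow from the definition $\a = \r M$ with $\r > 0$ and a nondegenerate canonical profile. Since the equivalence is established via a chain of reversible ``if and only if'' steps, no separate argument is needed for the two directions.
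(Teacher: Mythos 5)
Your proof is correct and matches the route the paper implicitly takes: the observation is stated without a written proof, but it follows exactly as you argue from Part (1) of Claim~\ref{clm:reparam}, since $c=1-H/M=\ln\bigl(\tfrac{1+\a}{2}\bigr)/\a$ has the sign of $\ln\bigl(\tfrac{1+\a}{2}\bigr)$ for $\a>0$. Your explicit attention to the positivity of $M$ and $\a$ is a small but welcome addition.
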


\begin{lemma}\label{lem:alpha-beta}
    Fix $n \geq 3$,
    express $\r$ as in Part (4)  
    of Claim~\ref{clm:reparam}. Then $\Xnr$ is a Nash equilibrium if and only if there exists $\b>0$ such that
        \begin{equation}\label{eq:alpha-beta1}
          e^{-\a}(1+\a)=e^{-2\b}(1+\b)~\phantom{.}
        \end{equation}
  and
        \begin{equation}\label{eq:alpha-beta2}
          e^{-\a}\left(1+\frac\a2\right) \leq e^{-\b}\left(\frac34+\frac\b2\right)~.
        \end{equation}
\end{lemma}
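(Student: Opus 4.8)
The plan is to translate the Nash-equilibrium criterion of Observation~\ref{obs:NE-conditions} into the two displayed relations by introducing a single auxiliary variable $\b$ tied to the optimal deviation point $t^*$, and then to read off the equivalence in both directions. Throughout I keep $\a=\r M$ as in \eqref{eq:def-alpha}, and I set $\b=\r(H-t^*)$, where $t^*$ is the maximizer of $\theta(t,H)$ from Lemma~\ref{lem:opt-hinterland}. Since \eqref{eq:minimal-lambda} gives $\r H>\ln2$, this maximizer is strictly interior: the derivative $\partial\theta/\partial t$ evaluated at $t=H$ equals $e^{-\r H}-\tfrac12<0$, so $t^*<H$ and hence $\b>0$, which supplies the required positivity of $\b$ at no extra cost.

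First I would simplify the right-hand side of \eqref{eq:internal-no-improve}. Because $\r H>\ln2$, Lemma~\ref{lem:region-difference} applies with $s=H$ and gives $\EH{t^*}=\EM{H-t^*}+\tfrac{1}{2\r}e^{-\r(H-t^*)}$, so the criterion $2\EM{M}\ge\EH{t^*}+\EM{H-t^*}$ collapses to $2\EM{M}\ge 2\EM{H-t^*}+\tfrac{1}{2\r}e^{-\r(H-t^*)}$. Substituting the closed forms of Claim~\ref{clm:region-profit}, multiplying through by $\r/2$, and cancelling the constant terms, this becomes exactly $e^{-\a}(1+\tfrac\a2)\le e^{-\b}(\tfrac34+\tfrac\b2)$, which is \eqref{eq:alpha-beta2}. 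Thus \eqref{eq:alpha-beta2}, for this particular $\b$, is equivalent to the equilibrium condition.

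Next I would derive \eqref{eq:alpha-beta1} as an identity forced by the definition of $t^*$. The canonical condition \eqref{eq:optimal-H}, written with $\a=\r M$, gives $e^{-\r H}=\tfrac{1+\a}2\,e^{-\a}$, while the first-order condition \eqref{eq:opt-hinterland} with $s=H$, rewritten using $\r(H-2t^*)=2\b-\r H$, gives $e^{-\r H}=\tfrac{1+\b}2\,e^{-2\b}$. Eliminating $e^{-\r H}$ between these two expressions yields $e^{-\a}(1+\a)=e^{-2\b}(1+\b)$, which is \eqref{eq:alpha-beta1}. Together with the previous paragraph this settles the forward direction: if $\Xnr$ is a Nash equilibrium, then $\b=\r(H-t^*)>0$ satisfies both \eqref{eq:alpha-beta1} and \eqref{eq:alpha-beta2}.

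For the converse I would invoke a uniqueness argument, which is the one genuinely delicate point. The map $\b\mapsto e^{-2\b}(1+\b)$ has derivative $-e^{-2\b}(1+2\b)<0$ for $\b>0$, so it is strictly decreasing; hence \eqref{eq:alpha-beta1} has at most one positive solution $\b$. Since $\b=\r(H-t^*)$ is already such a solution, any $\b>0$ satisfying \eqref{eq:alpha-beta1} must coincide with $\r(H-t^*)$, and then \eqref{eq:alpha-beta2} is precisely the equilibrium criterion by the reduction above. The remaining work is routine bookkeeping in the two reductions; the monotonicity step is what guarantees that the abstract $\b$ in the statement is forced to be the one arising from the true optimizer $t^*$, so that \eqref{eq:alpha-beta2} really expresses the deviation condition rather than a condition at an unrelated point.
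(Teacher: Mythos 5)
Your proof is correct and follows essentially the same route as the paper: the same substitution $\b=\r(H-t^*)$, the same reduction of \eqref{eq:internal-no-improve} via Lemma~\ref{lem:region-difference} to obtain \eqref{eq:alpha-beta2}, and the same elimination of $e^{-\r H}$ between \eqref{eq:optimal-H} and \eqref{eq:opt-hinterland} to obtain \eqref{eq:alpha-beta1}. Your explicit monotonicity argument for the uniqueness of the positive solution of \eqref{eq:alpha-beta1} (and the check that $\b>0$) makes precise a step the paper only asserts, but it is a refinement rather than a different approach.
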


\begin{proof}
    Let $s=H-t^*$.
    By Lemma~\ref{lem:region-difference},
    $$ \EH{t^*}+\EM{H-t^*} = \EH{H-s}+\EM{s} = 2\EM{s}+\frac1{2\r}e^{-\r s}~, $$
    so we may write Equation~\eqref{eq:internal-no-improve} as
    $$ \EM{M} \geq \EM{s}+\frac1{4\r}e^{-\r s}~. $$
    Plugging in Claim~\ref{clm:region-profit} and rearranging we obtain
    $$ e^{-\r M}\left(1+\frac{\r M}2\right) \leq e^{-\r s}\left(\frac34+\frac{\r s}2\right)~. $$
    Setting $\a=\r M$ and $\b=\r s$ in the above, we obtain Equation~(\ref{eq:alpha-beta2}). Note that $s$ is uniquely determined by $\a$, and therefore $\b$ may be derived from $\a$.

    By Lemma~\ref{lem:opt-hinterland},
    $$  e^{-\r H} = e^{-2\r s} \left(\frac{1+\r s}2\right)~, $$
    and plugging in Equation~\eqref{eq:optimal-H} we get
    $$ e^{-\r M} \left(\frac{1+\r M}2\right)=e^{-2\r s} \left(\frac{1+\r s}2\right)~,$$
    which translates to Equation~\eqref{eq:alpha-beta1}.

    Since each $\a>0$ defines a unique canonical profile and a unique $\b$, it follows that Equations~\eqref{eq:alpha-beta1} and \eqref{eq:alpha-beta2} are equivalent to Equation~\eqref{eq:internal-no-improve}. Therefore, by Observation~\ref{obs:NE-conditions}, Equations~\eqref{eq:alpha-beta1} and \eqref{eq:alpha-beta2} are necessary and sufficient conditions for a Nash equilibrium.
\end{proof}

We are now ready to prove Theorem~\ref{thm:NE-lower-bound}.

\par\noindent {\em Proof of Theorem~\ref{thm:NE-lower-bound}.~}
By Observation~\ref{obs:alpha_geq_one} and Lemma~\ref{lem:M_geq_H}, if $\a\geq1$ then the canonical profile is a Nash equilibrium. It is left to consider $0<\a<1$. By Lemma~\ref{lem:alpha-beta}, Equations~\eqref{eq:alpha-beta1} and \eqref{eq:alpha-beta2} are sufficient and necessary conditions for a Nash equilibrium. We rewrite these equations as follows.
 \begin{align}
	e^{-\a}(1+\a) &=e^{-2\b_1}(1+\b_1)~, \label{eq:alpha-beta3} \\
	e^{-\a}\left(1+\frac\a2\right) &=e^{-\b_2}\left(\frac34+\frac{\b_2}2\right)~, \label{eq:alpha-beta4} \\
	\b_1 &\leq \b_2~. \label{eq:alpha-beta5}
 \end{align}

Clearly, Equations~\eqref{eq:alpha-beta1} and \eqref{eq:alpha-beta2} hold for $\a$ if and only if Equations~\eqref{eq:alpha-beta3},\eqref{eq:alpha-beta4} and \eqref{eq:alpha-beta5} hold for that $\a$.
Figure~\ref{fig:param-a-b} shows Equations~\eqref{eq:alpha-beta3} and \eqref{eq:alpha-beta4} as parametric curves.
\begin{figure}[ht]
    \begin{center}
        \includegraphics[height=8cm]{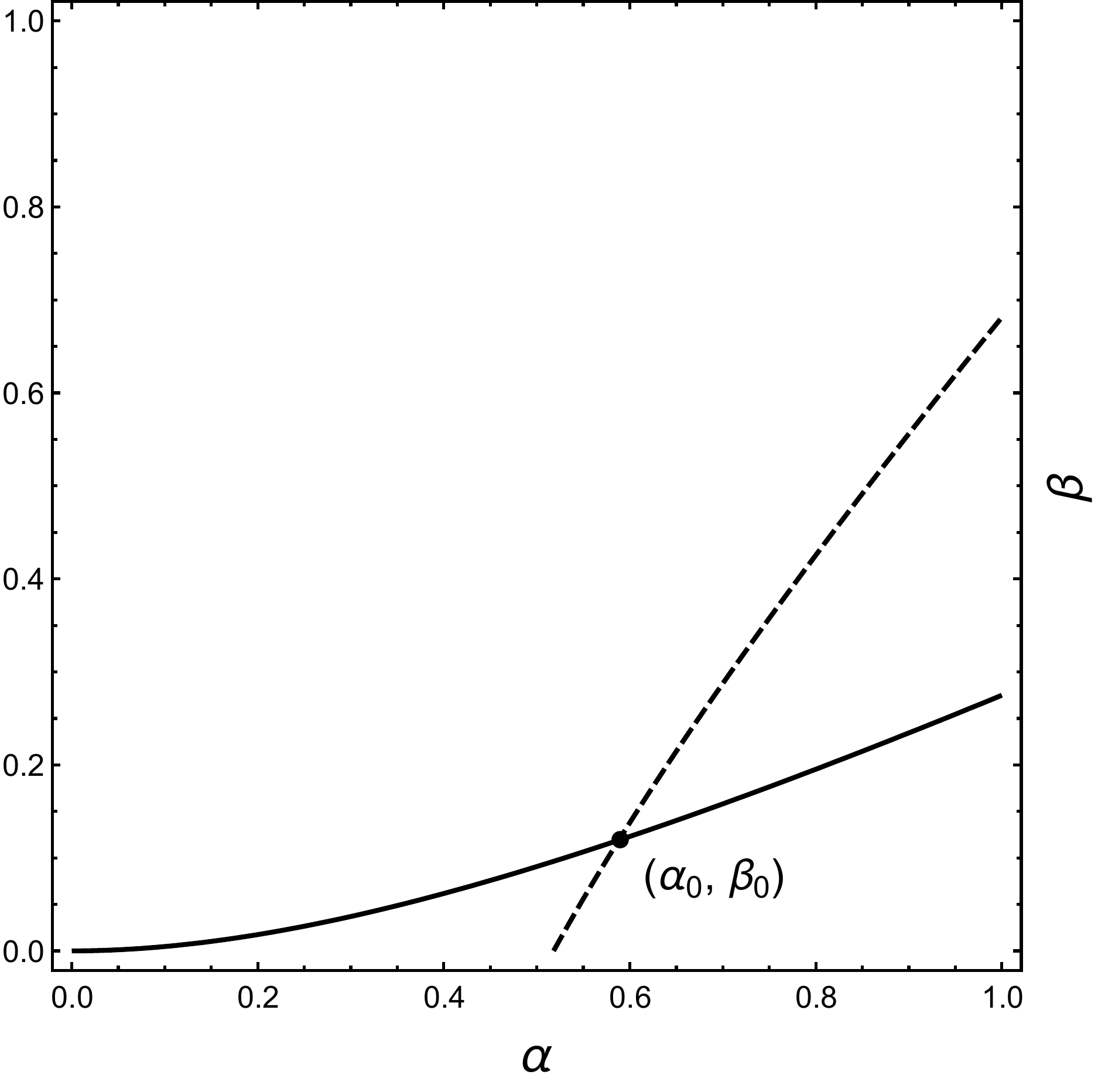}
    \end{center}
    \caption{the solid line is the curve of Equation~\eqref{eq:alpha-beta3} and the dashed line is the curve of Equation~\eqref{eq:alpha-beta4}.}
    \label{fig:param-a-b}
\end{figure}
We next show that for $0<\a<1$ these two curves intersect at a single point $(\a_0,\b_0)$, as depicted in Figure~\ref{fig:param-a-b}. First note that by the implicit function theorem these two curves are continuous and differentiable for $0\leq\a\leq1$. Additionally, it is easy to check that if $\a=0$ then $\b_1<\b_2$ and that if $\a=1$ then $\b_1>\b_2$. Hence, the two curves intersect at least once for $0<\a<1$. To show they intersect \emph{exactly} once, we consider the $\a$ derivatives of $\b_1$ and $\b_2$, and show that, at each intersection point, $d\b_2/d\a>d\b_1/d\a$. Since $\b_1$ and $\b_2$ are continuous and continuously differentiable as functions of $\a$, this suffices to show that they intersect only once.

Accordingly, we now show that $d\b_2/d\a>d\b_1/d\a$ at each intersection point. By implicit differentiation (i.e., taking the $\a$ derivative of both sides of an implicit function) we obtain
$$ -e^{-\a}\,\a  = -2 e^{-2\b_1}(1+2\b_1) \, \frac{d \b_1}{d \a} $$
and
$$ -e^{-\a}\left(\frac{1+\a}{2}\right) = -e^{-\b_2}\left(\frac{1+2\b_2}{4}\right) \frac{d \b_2}{d \a}~.$$
Rearranging, we get
$$ \frac{d \b_1}{d \a} = \frac12 \; e^{2\b_1-\a}\,\left(\frac{\a}{1+2\b_1}\right)  $$
and
$$ \frac{d \b_2}{d \a} = e^{\b_2-\a}\left(\frac{2+2\a}{1+2\b_2}\right) ~.$$
Let $(\a_0,\b_0)$ for $0<\a_0<1$ be an intersection of the curves, i.e., when $\a=\a_0$ we have $\b_1=\b_2=\b_0$. Therefore, $d\b_2/d\a>d\b_1/d\a$ if
	$$ e^{\b_2-\a}\left(\frac{2+2\a_0}{1+2\b_0}\right) > \frac12 \; e^{2\b_0-\a_0}\,\left(\frac{\a_0}{1+2\b_0}\right)~, $$
which yields
\begin{equation}\label{eq:beta-bound}
	4\,\left(1+\frac{1}{\a_0}\right) > e^{\b_0}~.
\end{equation}
Assume $\b_0>0$, otherwise Equation~\eqref{eq:beta-bound} holds and we are done. By Equation~\eqref{eq:alpha-beta3}, we get
	$$ e^{-\a_0}(1+\a_0) = e^{-2\b_0}(1+\b_0) \leq e^{-2\b_0}(1+2\b_0)~.$$
Since $ e^{-x}(1+x) $ is monotone decreasing for $x>0$, it follows that $2\b_0 \leq \a_0$. Hence, since $\a_0<1$, we get
	$$ e^{\b_0} \leq e^{\frac{\a_0}{2}} < e^{\frac{1}{2}} < 4\,\left(1+\frac{1}{\a_0}\right)~,$$
and thus Equation~\eqref{eq:beta-bound} holds. This proves that the curves defined by Equations~\eqref{eq:alpha-beta3} and \eqref{eq:alpha-beta4} intersect only once for $\a>0$.

It follows that Equations~\eqref{eq:alpha-beta1} and \eqref{eq:alpha-beta2} are satisfied if and only if $\a\geq\a_0$. Hence, by Lemma~\ref{lem:alpha-beta} and Claim~\ref{clm:reparam}, the canonical profile $\Xnr$ for $\r=\a(n+1)-2\ln((1+\a)/2)$ is a Nash equilibrium if and only if $\a\geq\a_0$. Since, by Observation~\ref{obs:monotone-lambda}, $\r$ is strictly increasing as a function of $\a$, the theorem follows.

The second part of the theorem is obtained by numerical approximation of the point of intersection $(\a_0,\b_0)$.
\qed

\par\noindent {\em Proof of Corollary~\ref{cor:NE-profiles}.~}
    The maximum possible value for $c$ is $c_{\max}\approx 0.232$, as shown at the end of Section~\ref{sec:canonical-profile}, which yields the lower bound for $H/M$.

    By Theorem~\ref{thm:NE-lower-bound}, a Nash equilibrium exists for $n\geq3$ if and only if $\a \geq \a_0$, and by Part (1) of Claim~\ref{clm:reparam} we have that $1-H/M = c = (\ln((1+\a)/2))/\a$. By simple calculus, the function $c(\a)$ satisfies that $c(\a) \geq c(\a_0)$ if $\a\geq \a_0$. We therefore obtain the upper bound on $H/M$ by setting $c_{\min}=c(\a_0)\approx -0.392$.
\qed

\end{document}